\newcommand{\Prob}{\mathbb{P}}
\newcommand{\R}{\mathds{R}}
\newcommand{\E}{\mathds{E}}
\newcommand{\bone}{ {\bf 1} }
\newcommand{\bzero}{ {\bf 0} }
\newcommand{\bZ}{ {\bf Z} }
\newcommand{\bR}{ {\bf R} }
\newcommand{\bY}{ {\bf Y} }
\newcommand{\bD}{ {\bf D} }
\newcommand{\bI}{ {\bf I} }
\newcommand{\bJ}{ {\bf J} }
\newcommand{\bG}{ {\bf G} }
\newcommand{\bC}{ {\bf C} }
\newcommand{\bv}{ {\bf v} }
\newcommand{\bmu}{ {\boldsymbol \mu} }
\newcommand{\bSigma}{ {\boldsymbol \Sigma} }
\newcommand{\bz}{ {\bf z} }
\newcommand{\appropto}{\mathrel{\vcenter{
  \offinterlineskip\halign{\hfil$##$\cr
    \propto\cr\noalign{\kern2pt}\sim\cr\noalign{\kern-2pt}}}}}
\definecolor{green}{rgb}{0.3, 0.73, 0.09}
\newcommand{\magenta}[1]{\!}
\newcommand{\ind}{\,\perp\!\!\!\!\!\perp\,} 
\newtheorem{Lemma}{\bf Lemma}
\newtheorem{Theorem}[Lemma]{\bf Theorem}
\newtheorem{Proposition}[Lemma]{\bf Proposition}
\newtheorem{Remark}[Lemma]{\bf Remark}
\newtheorem{lems}{\textsc{Lemma}}[section]
\providecommand{\leftsquigarrow}{%
  \mathrel{\mathpalette\reflect@squig\relax}%
}
\newcommand{\reflect@squig}[2]{%
  \reflectbox{$\m@th#1\rightsquigarrow$}%
}
\newcommand\subsubsubsection{\@startsection{paragraph}{4}{\z@}{-2.5ex\@plus -1ex \@minus -.25ex}{1.25ex \@plus .25ex}{\normalfont\normalsize\bfseries}}
\newcommand{\blind}{1}
\def\expandafter\normalsize\expandafter{%
    \normalsize%
    \setlength\abovedisplayskip{4pt}%
    \setlength\belowdisplayskip{4pt}%
    \setlength\abovedisplayshortskip{4pt}%
    \setlength\belowdisplayshortskip{4pt}%
}
\begin{document}

\def\spacingset#1{\renewcommand{\baselinestretch}%
{#1}\small\normalsize} \spacingset{1}

\if1\blind
{
	\title{\vspace{-10pt}
	\Large{\bf Phylogenetic Latent Space Models for Network Data}}
\author{Federico Pavone$^{1}$, Daniele Durante$^{2}$ and Robin J. Ryder$^{3}$\\
$^{1}$CEREMADE, CNRS, Université Paris--Dauphine, Paris\footnote{ The current affiliation of Federico Pavone is Theremia Health, Paris, France. However, this research was conducted while~the~author was a Postdoctoral Fellow at CEREMADE, Université Paris--Dauphine, Paris, France.}\hspace{.2cm}\\
	 $^{2}$Department of Decision Sciences, Bocconi University, Milan \hspace{.2cm}\\
	 $^{3}$Department of Mathematics, Imperial College London, London}
	\date{}
	\maketitle
} \fi

\if0\blind
{
	\bigskip
	\bigskip
	\bigskip
	\begin{center}
		{\LARGE\bf Phylogenetic Latent Space Models for Network Data}
	\end{center}
	\medskip
} \fi

\begin{abstract}
Latent space models for network data characterize each node through a vector of latent features whose pairwise similarities define the edge probabilities among the pairs of nodes. Although this formulation has led to successful implementations, the overarching focus  has been on directly inferring node embeddings through the latent features, rather than learning the generative process underlying these embeddings.  This focus prevents borrowing information across the  node features and limits the ability to infer higher-level architectures governing network formation. For example, routinely-studied networks often exhibit multiscale structures informing on nested modular hierarchies among nodes, which could be learned via tree-based representations of dependencies among the latent features. We pursue this direction by bridging latent variable representations~of network data with concepts from phylogenetic inference to design a novel  latent space model~that  explicitly characterizes the generative process of the node feature vectors through a  branching Brownian motion, with branching structure parametrized by a tree. This tree constitutes the~main object of interest and is learned under a Bayesian perspective leveraging priors inherited from phylogenetic literature to infer tree-based modular hierarchies across nodes, which explain heterogeneous multiscale patterns in the network.  Identifiability~results are derived along with posterior consistency theory. The  inference potentials of our model are illustrated in simulations and two real-data applications from criminology and neuroscience, where our formulation learns core structures hidden to state-of-the-art alternatives.
\end{abstract}
\noindent%
{\it Keywords:}  Bayesian statistics; Latent space model; Network data; Phylogenetic tree

\spacingset{1.78} 

\section{Introduction}\label{sec_1}
Latent variable models for network data characterize the probabilistic process of edge formation~through a function of node-specific latent quantities capable of accounting for core network properties such as, for example, transitivity, stochastic equivalence, homophily and community structure. This broad~family  includes, among others, stochastic block models \citep{nowicki2001estimation}, mixed membership stochastic block models \citep{airoldi2008mixed}, latent space models \citep{hoff2002latent}, random dot-product graph models \citep{athreya2018statistical} and graphon models \citep{caron2017sparse, borgs2018sparse}, thereby providing one of the most widely implemented and studied~classes~of~statistical models for network data. Within this class, latent space models have been the object of primary attention~owing to the associated flexible, yet interpretable, representation, which naturally lends~itself to several extensions in various directions. Focusing on the ubiquitous binary undirected network setting, these models assume the edges among pairs of nodes $(v,u)$ as conditionally independent Bernoulli variables with probabilities depending on a measure of pairwise similarity among vectors of latent endogenous  features characterizing nodes $v$ and $u$, respectively,~and,~possibly, on additional exogenous effects arising from node-specific attributes.~Recalling~\citet{hoff2002latent},~this~representation crucially allows to incorporate central network properties such as transitivity and more nuanced notions of stochastic equivalence that possibly inform on community and homophily structures, thereby~allowing~to~learn~core endogenous architectures through an informative embedding of nodes. This embedding also facilitates the graphical identification of central nodes and improves inference on exogenous node-attribute effects after accounting~for~the~endogenous~network~structure.

These advantages have motivated rapid and successful generalizations of latent space models for a single network in several important directions, including, in particular, dynamic regimes \citep[see, e.g.,][]{sarkar2006dynamic,durante2014nonparametric,sewell2015latent}, multilayer settings \citep[e.g.,][]{gollini2016joint,salter2017latent,macdonald2022latent} and replicated networks contexts \citep[e.g.,][]{durante2017nonparametric,wang2019common,arroyo2021inference}. Although these contributions provide routinely-implemented state-of-the-art extensions of the original formulation proposed by \citet{hoff2002latent}, the overarching focus remains on directly inferring~the nodes latent features, rather than explicitly incorporating  and learning higher-level informative architectures that regulate the formation process of these features. Advancements~in this direction~would~not only improve borrowing of information across the feature vectors~of~the different nodes, but~would~also~open~the avenues for inferring more nuanced structures which characterize complex multiscale network topologies. As will be shown in  Sections~\ref{sec_phy_51} and \ref{sec_phy_52}, these multiscale patterns are inherent~to~several~networks studied in practice, ranging from criminology \citep[see, e.g.,][]{calderoni2017communities,coutinho2020multilevel} to neuroscience \citep[e.g.,][]{bullmore2009complex,bullmore2012economy,betzel2017multi}, and could~be leveraged to learn unexplored tree-based~modular hierarchies across nodes that drive the formation~of such structures. As shown in Figures \ref{figure:crime_2} and \ref{figure:brain_2}, the inferred tree-based structures  unveil, for example, hidden organizational architectures of modern criminal networks and fundamental multiscale modules~in~structural~brain~connectivity.

\begin{figure}[t!]
	\centering
	\includegraphics[width=1\textwidth]{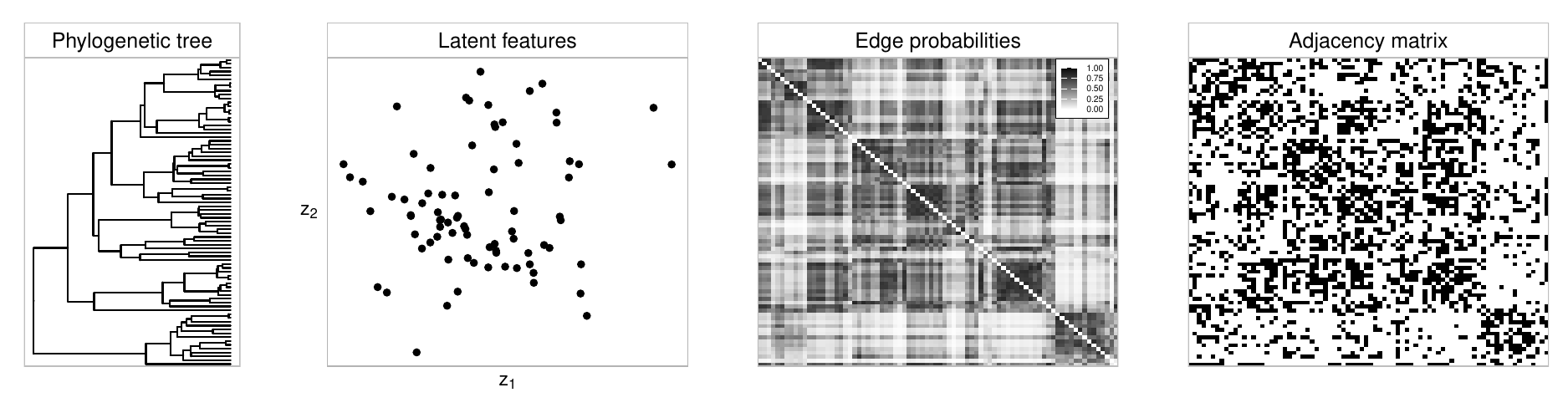}
		\vspace{-15pt}
    \caption{{Graphical representation of the generative process underlying the proposed phylogenetic latent space model. From left to right: Tree defining the branching structure regulating the feature formation process, modeled~via~a~branching Brownian motion (the leaves of the tree correspond to nodes of the network); Node-specific latent features obtained as a realization of the  branching Brownian motion; Matrix of pairwise edge probabilities defined as the logit mapping of the negative Euclidean distance among~the pairs of~node-specific latent features; Adjacency matrix representation of the network with edges sampled from independent Bernoullis conditioned on the corresponding edge probability. Although  \textsc{phylnet} allows for $K$-dimensional  feature vectors, here we consider $K=2$ to facilitate visualization. }}
    \label{figure:phylnet}
\end{figure} 

Our aim is to address the above gap and infer unexplored tree-based hierarchies between nodes.~To this end, we develop a novel phylogenetic latent space~model (\textsc{phylnet}) for network data  that explicitly characterizes the mechanism of the node~features formation via a branching Brownian motion, with branching structure parametrized by a tree whose prior is inherited from phylogenetic inference \citep[e.g.,][]{felsenstein2004inferring}. 
As illustrated in Figure~\ref{figure:phylnet}, the generative process behind the proposed model combines classical latent~space representations of the edges in the network \citep{hoff2002latent} with a structured characterization of~the  feature formation process through a flexible tree-based architecture, which can unveil increasingly-nested modular hierarchies among the nodes. 
Although this perspective has been mostly overlooked in the literature on latent space models, we note in Remark \ref{rem1} that classical formulations \citep{hoff2002latent} relying~on independent and identically distributed Gaussian priors for the latent features can be, in fact, interpreted as a degenerate version of our model, where the tree structure is not learned, but rather implicitly fixed at a restrictive topology. This coincides~with~an~ultrametric tree where all the $V$ nodes split at the root, so as to generate $V$ separate  Brownian motions yielding independent Gaussian priors at the leaves with equal variances guaranteed by the ultrametric property (i.e., all the paths connecting the root to a leaf have equal length). While convenient,~this~representation relies on~a restrictive and pre-specified topology, which is unable to characterize and learn more nuanced hierarchical architectures among nodes and increasingly-nested group structures that ultimately drive the formation of multiscale network patterns observed in practice.~Addressing~this~issue requires a formulation that (i) does not pre-specify a tree structure,~but~rather~treats~it~as~an~object to be inferred, and (ii) can flexibly account for complex topologies, which naturally enlarge the simplified one which is implicit in classical latent space models \citep{hoff2002latent}. The~proposed \textsc{phylnet} construction addresses (i) by explicitly including the tree as a model parameter to be inferred, and achieves (ii) through priors that are routinely employed in the filed of phylogenetic inference \citep{felsenstein2004inferring} to obtain full support in~the~entire~space~of~binary~ultrametric~trees.

Our perspective is also supported  by past and recent research on non-Euclidean geometries~of~the latent space that can accurately characterize more nuanced network structures often observed in~practice \citep[see, e.g.,][]{smith2019geometry, lubold2023identifying}; see also \citet{freeman1992sociological, borgatti1990ls} and \citet{schweinberger2003settings}  for a specific focus on ultrametrics capable of accounting for hierarchies~of nested transitive relations. These contributions highlight an inherent connection between~several non-Euclidean geometries and tree-based structures.~However, despite the importance of this connection, there have been limited efforts to develop~a flexible, yet~interpretable,  latent space model characterizing explicitly the feature formation mechanism via~a tree-based representation that can be learned as part of the inference process. In fact,  current attempts to include and learn structure in the features' formation process \citep[e.g.,][]{handcock2007model, fosdick2019multiresolution} mostly focus on combining ideas from latent~space models and stochastic block models to infer not only lower-level heterogeneous node-specific~latent~features, but also higher-level group structures related to such features. While this perspective provides an important improvement over classical latent space models,~as clarified in  Sections~\ref{sec_phy_51} and \ref{sec_phy_52}, this two-level representation is not designed to infer more~complex~hierarchical architectures among nodes leading to those multiscale network patterns often observed in practice. The importance of accounting for these more nuanced mechanisms is evident in the recent contribution by \citet{pu2025tree}, where the formation of node feature  vectors is supervised by external information~available in the form of a tree-structured node hierarchy. However, besides relying on a substantially different generative mechanism relative to the one we design for the proposed \textsc{phylnet} model, this construction assumes the tree as a known external object, rather than as an unknown architecture~to~be learned. Therefore, there is no guarantee that the employed exogenous~tree-structured node hierarchy aligns with the endogenous ones driving network formation.  Unveiling these~latter~structures requires treating the tree as a model parameter to be inferred within a wide class of topologies. The \textsc{phylnet} formulation is specifically designed to address such an objective.  

Although the direction we pursue  is novel in the latent space modeling framework,  inference on~tree-based representations has been explored in generalizations of stochastic block models  \citep[e.g.,][]{roy2006learning,clauset2008hierarchical,roy2008mondrian,herlau2012detecting}. However, the overarching focus  has been on learning endogenous hierarchies between nodes or node partitions, under stochastic equivalence or more general homogeneity assumptions. This perspective yields substantially different models and tree-based representations. Moreover, it might fail to characterize heterogeneous patterns at the node level and, as illustrated empirically in Section~\ref{sec_phy_4}, it experiences~challenges~in networks without~a~clear block structure. Conversely, the  \textsc{phylnet} model flexibly characterizes broader network structures by modeling the feature formation process directly at the node level. This requires methodological and computational innovations also with respect to the literature on phylogenetic trees, since~the~node-specific features are not directly observed, but~rather denote latent quantities that are identifiable~only up~to translations and rotations  \citep{hoff2002latent}. In Section~\ref{indet}, we prove theoretically that, despite this identifiability issue at the latent features level, the main object~of inference, i.e., the tree, remains identifiable. 

The \textsc{phylnet} model also lends itself naturally~to extensions~in different settings, including~latent space representations for directed, bipartite, multilayer and multiplex/replicated networks. Section~\ref{sec_phy_23} pursues this direction with a focus on the latter class, which comprises~multiple~network observations, on the same set of nodes, encoding either different types of relationships~(multiplex networks) or replicated measurements of the same connectivity  notion  (replicated networks). In this context, we allow the latent features~of~each~node~to change across networks, but assume a shared tree  regulating the formation of these features~in~the different networks. This allows to borrow information between the networks for estimating complex tree-based modular hierarchies among nodes, while preserving flexibility in characterizing network-specific multiscale structures. In this setting we also prove theoretical consistency properties in learning the  tree as the number of networks increases;~see~Section~\ref{sec_consis}.

In Section~\ref{sec_phy_3}, we provide details for posterior inference on the tree parameterizing~the~\textsc{phylnet} model in a Bayesian setting and under a pure-birth process prior for the tree inherited from phylogenetic inference. Combined with the latent space model likelihood, such a prior induces a posterior distribution for the tree, which we learn through a bespoke Metropolis-within-Gibbs~algorithm.  The tree samples produced by this algorithm are then  summarized~for~both~point~estimation~and uncertainty quantification via the notions of {\em consensus tree} \citep[][]{felsenstein2004inferring} and {\em DensiTree} \citep{bouckaert2010densitree}. 
Our simulation studies in Section~\ref{sec_phy_4} show that \textsc{phylnet} provides remarkable performance improvements over both heuristic and model-based solutions for learning hierarchical multiscale structures~in network data, including state-of-the-art generalizations of stochastic block models  \citep[e.g.,][]{clauset2008hierarchical}. Accounting for such structures further facilitates borrowing of information among node latent features to obtain improved estimates of the edge probabilities under \textsc{phylnet}, compared to classical latent space models \citep[e.g.,][]{hoff2002latent}. These gains are confirmed in applications to multiple~corrupted measurements of a Mafia-type criminal network \citep[e.g.,][]{calderoni2017communities,legramanti2022extended} (see Section~\ref{sec_phy_51}), and to replicated brain connectivity data observed from different individuals \citep[e.g.,][]{craddock2013imaging,kiar2017high} (see Section~\ref{sec_phy_52}). In the former case, \textsc{phylnet}  unveils a previously-unexplored tree-based reconstruction of the organizational architecture of the Mafia group under analysis, and highlights criminals with highly peculiar positions within the hierarchy.~In~the~latter, the inferred  tree learns interesting~nested~symmetries in the human brain, while  pointing toward a top-level frontal-back division of the brain followed by a nested partition in the two hemispheres.~Future directions of research are discussed in Section~\ref{sec_phy_6}. Proofs and additional results can be found~in~the Supplementary~Material.

\section{Phylogenetic Latent Space Models}\label{sec_phy_2}

\subsection{A brief introduction to phylogenetic trees}\label{sec_phy_21}
As discussed in Section~\ref{sec_1}, the proposed  \textsc{phylnet} model describes the formation of node features~through a tree-structured mechanism, whose prior is inherited from phylogenetic inference. More specifically, the trees we consider are rooted, binary branching trees endowed with branch~lengths. As mentioned in our  introduction, this choice naturally extends the tree structure that is implicit in classical latent space models \citep{hoff2002latent} to explore substantially more general architectures gaining flexibility in the positioning of the internal splits and in the length of the~tree branches. When endowed~with a prior, these architectures can be characterized as random trees with random branch lengths~or~equivalently as branching processes \citep[][]{aldous2001stochastic}.~By~providing~an interpretable graphical representation of tree-structured relationships among entities, these constructions have been  extensively studied and routinely employed to infer evolutionary relations in phylogenetic inference; see e.g., \citet{felsenstein2004inferring} for an introduction. Nonetheless, more recent contributions have shown that the utility of these ideas extends  beyond evolutionary biology, yielding successful implementations also in  linguistics \citep{hoffmann2021bayesian,phylolinguisticsworkflow} and cultural evolution \citep{evans2021uses, buckley2025contrasting}.

In Sections~\ref{sec_phy_22}--\ref{sec_phy_23}, we further extend the applicability and potentials of this construction~by~leveraging trees to characterize the formation process of node-specific latent features in models for network data. To this end, we consider trees having a fixed number of leaves $V$ (corresponding to the nodes within the network) and meeting the ultrametric property (all paths connecting~the~root~to~a~leaf~node have the same length, defined as the sum of the branch lengths). As discussed~in Section~\ref{sec_1},~this~property is  motivated~by the relevance of specific non-Euclidean geometries in latent space models for networks \citep[e.g.,][]{smith2019geometry}, and is met by the tree structure that is implicit in the latent space model as originally introduced by \citet{hoff2002latent} (refer to Section~\ref{sec_phy_6} for possible extensions to non-ultrametric settings). To perform inference on such~a~tree under the  \textsc{phylnet} model, we employ a Bayesian~perspective and leverage priors that are widely used in the phylogenetics literature \citep[see, e.g.,][]{chen2014bayesian}. We consider in particular elicitations within the class of   pure~birth Yule processes \citep{yule1925ii}. As clarified in the following, this simple and interpretable construction not only facilitates principled point estimation and uncertainty quantification, but also achieves full support on the space~of~trees that are of interest and can be reasonably learned in latent space models for networks. Note~that we restrict ourselves to binary trees, a common choice in phylogenetics. This is not a strong~constraint,~as a multifurcating tree can be approximated by a binary tree with very short branches. Moreover, the consensus trees we use as a summary of the posterior distribution need~not~be~binary

In classical phylogenetic models, the above tree constitutes the branching architecture~that~regulates the formation of a certain character of interest across a set of entities denoting the leaves~of~the tree. 
Often, such a process is modeled as either a continuous or discrete state-space continuous-time Markov chain, depending on the nature of the character.  In our \textsc{phylnet} model, the character denotes the real-valued latent features of the different nodes, thereby requiring a  continuous~state-space process. Motivated by its success in evolutionary modeling \citep[][]{felsenstein1985phylogenies,eastman2011novel,may2020bayesian} we consider, in particular, a branching Brownian motion to provide a flexible, yet tractable, characterization of the feature formation process over the tree architecture. This choice crucially induces a Gaussian distribution for the features at the leaves, with a variance-covariance structure that reflects~the tree topology.  For~simplicity, consider the case in which each leaf $v=1, \ldots, V$ is characterized by a single latent feature $z_{v} \in \mathbb{R}$, and  denote with $\sigma^2$ the variance parameter of the Brownian~motion. Let $T$ be the height of the tree $\Upsilon$, \textit{i.e.,} the distance between the root of the tree and any of its leaves. For every pair of nodes $(v,u)$, let $t_{vu}\in[0,T]$ be the distance between~the~root and the most recent common ancestor to $v$ and $u$; a large value of $t_{vu}$ means that $v$ and $u$ are close in the tree. Then, under the branching Brownian motion model, the joint distribution of the features' vector  at the leaves of the tree is a multivariate Gaussian,  \textit{i.e.,}  $[z_{1}, z_{2}, \ldots, z_{V}]\sim\mbox{N}_V(0,\Sigma_{\Upsilon})$ with marginal variances $\mbox{var}(z_{v})=\sigma^2 T$ and covariances $\mbox{cov}(z_{v},z_{u})=\sigma^2 t_{vu}$.  Hence, the ``closer'' $v$ and $u$ are in the tree, the higher the covariance among the corresponding features. This means that tree-based representations~of nested modular hierarchies among nodes can be possibly learned from the structure in the pairwise dependencies among nodes latent features. This intuition~is~at~the~basis~of~the~\textsc{phylnet}~model~presented in Section~\ref{sec_phy_22} (see Chapter 4 in \citet{felsenstein2004inferring} for more details on phylogenetic trees).

\setlength\abovedisplayskip{8pt}%
    \setlength\belowdisplayskip{8pt}%

\subsection{Model formulation for a single network}\label{sec_phy_22}
\vspace{4pt}
Although the   \textsc{phylnet} model can be readily extended to several network structures, including directed, bipartite and weighted settings, we focus here on the simplest, yet ubiquitous,~case~of a single binary undirected network. Let $\bY$ denote the $V\times V$ symmetric adjacency matrix representation~of~such a network, so that  $[\bY]_{vu}=y_{vu}=y_{uv}=1$ if nodes $v$ and $u$ are connected, and $y_{vu}=y_{uv}=0$ otherwise, for $1\leq u < v \leq V$. Consistent with classical latent space models for network data \citep{hoff2002latent}, we assume  
\begin{equation}\label{eq:lik_single}
(y_{vu}\mid \theta_{vu}) \overset{\text{ind}}{\sim}\textsc{bern}(\theta_{vu}) \quad \mbox{with} \quad    \text{logit}(\theta_{vu}) = a\, - \lVert\bz_v - \bz_u \lVert, \qquad  1\leq u < v \leq V,
\end{equation}
where $a \in \mathbb{R}$ is a scalar controlling the overall network density, while $\bz_v \in \mathbb{R}^{K}$ and $\bz_u \in \mathbb{R}^{K}$~are the vectors of latent features for the nodes $v$ and $u$, respectively, with $\lVert\bz_v - \bz_u \lVert$ denoting~the~Euclidean~distance among the features vectors. The above representation embeds the nodes into a $K$-dimensional latent space with positions informing on the edge probabilities. More specifically, the closer the features vectors of  $v$ and $u$, the higher the probability to observe an edge among these two nodes.~As~mentioned in Section~\ref{sec_1}, such a natural interpretation has motivated a direct focus on the latent features $\bz_1, \bz_2, \ldots, \bz_V$ as the main object of inference. To this end, routine implementations rely on  independent Gaussian \citep{hoff2002latent} or mixtures of Gaussian \citep{handcock2007model} priors for the node latent features and then provide inference on the induced posterior distribution. Although the latter perspective increases flexibility and inference potentials relative to the former, both solutions lack a structured and realistic characterization of dependence among the node-specific feature vectors, and still treat these vectors as the main object of interest. As a consequence, inference reduces to graphical interpretations of a model-based embedding of the nodes, possibly grouped in dense communities under mixtures of Gaussian priors. In fact, one would expect that the latent features exhibit, in practice, structured~dependence~across~nodes, with this dependence possibly unveiling more fundamental and informative architectures, such as increasingly nested modular hierarchies among nodes,  at the basis of multiscale patterns in the network.

Motivated by the above discussion and recalling Section~\ref{sec_phy_21}, we complement model~\eqref{eq:lik_single}~with a structured characterization of the feature formation process via a branching Brownian motion parameterized by a tree that constitutes the main object of inference and is assigned a prior inherited from phylogenetic literature. Crucially, this advancement (i) facilitates an improved reconstruction of node embeddings by explicitly borrowing~information across the features of the different nodes and (ii) substantially enlarges inference potentials on hidden node hierarchies through the  tree. Specifically, let $\bZ \in\R^{K\times V}$ be the $K \times V$ matrix with generic row $\bZ_{[k]}=[z_{k1}, \ldots, z_{kV}] \in \R^{1 \times V}$ encoding the values of the $k$-th feature for the~$V$~nodes, and denote with $\Upsilon$ the (ultrametric) tree characterizing the branching architecture which regulates features' formation.  In this setting and conditional on $\Upsilon$, we assume $K$ independent branching Brownian motions (\textsc{bbm}) for the different features' dimensions from $1$ to $K$. Recalling Section~\ref{sec_phy_21}, this implies that, at the leaf nodes, 
\begin{equation}\label{eq:bbm_row}
    \begin{aligned}
    (\bZ^{\intercal}_{[k]}\mid \mu_k, \sigma^2,\Upsilon) \overset{\text{ind}}{\sim}\mbox{N}_V(\mu_k\bone_V,\sigma^2\bSigma_\Upsilon), \qquad \text{for}\,\, k=1,\dots,K,
    \end{aligned}
\end{equation}
where $\bone_V$ is the $V\times 1$ vector of all ones, $\mu_k$ denotes a centering parameter for the $k$-th feature,~$\sigma^2$ is the \textsc{bbm} rate, while $\bSigma_\Upsilon$ corresponds to the $V\times V$ correlation matrix induced by the tree $\Upsilon$,~as described in Section~\ref{sec_phy_21}.
In the applications we consider, the parameter $T$ (i.e., the height of $\Upsilon$)~has no clear interpretation and is not identifiable. As such, we  fix $T=1$, so that $\sigma^2>0$ can be directly~interpreted as the marginal variance of each $z_{kv}$, for $v=1, \ldots, V$, while $[\bSigma_\Upsilon]_{vu}=t_{vu} \in [0,1]$ denotes the correlation between $z_{kv}$ and $z_{ku}$ for any $k=1,\dots,K$. Let us emphasize that setting $T=1$ does not constrain the prior specification in~\eqref{eq:bbm_row}, nor does it affect the tree topology.

\setlength\abovedisplayskip{6.3pt}%
    \setlength\belowdisplayskip{6.3pt}%
    
To complete the Bayesian specification, we require priors for  $\sigma^2$, $\mu_1, \ldots, \mu_K$ and the tree~$\Upsilon$~in~\eqref{eq:bbm_row}, along with the scalar $a$ in \eqref{eq:lik_single}. Regarding $\sigma^2$ and $\mu_1, \ldots, \mu_K$, note that, under \eqref{eq:bbm_row}, the latent features are identifiable only up to translations and rotations  \citep[see, e.g.,][]{hoff2002latent}. As proved in~the~following, this issue does not affect the identifiability of the main object of interest, i.e., the tree  $\Upsilon$, but  it implies that the centering quantities $\mu_1, \ldots, \mu_K$ in \eqref{eq:bbm_row} can be treated as nuisance parameters. However,~as highlighted in Section~\ref{sec_phy_31}, including these quantities allows us to re-center the latent features at~each~step of the Metropolis-within-Gibbs routine we derive, and hence obtain improved mixing and convergence of the chain for $\sigma^2$. Consistent with this discussion, we consider diffuse Gaussian priors for~$\mu_1, \ldots, \mu_K$ and a conditionally conjugate inverse-Gamma prior for $\sigma^2$. More precisely, we let
\begin{equation}\label{eq:bbm_mu_sig}
    \begin{aligned}
   \mu_k \overset{\text{ind}}{\sim} \mbox{N}(0, \sigma_{\mu}^2), \ \ \text{for}\,\, k=1,\dots,K, \qquad \qquad \sigma^2 \sim \textsc{inv-gamma}(\alpha_\sigma,\beta_\sigma).
    \end{aligned}
\end{equation}
Recalling Section~\ref{sec_phy_21}, for the tree  $\Upsilon$ we consider a  pure-birth Yule process  prior \citep{yule1925ii}.~This~can be derived as a special case of birth and death branching processes $\textsc{bdt}(b,d)$ \citep[e.g.,][]{harris1963theory,ross2014introduction} employed in phylogenetics by setting the death~rate~$d=0$ to obtain
\begin{equation}\label{eq:priortree}
    \begin{aligned}
   (\Upsilon \mid b) \sim \textsc{bdt}(b,0), \qquad b \sim \textsc{inv-gamma}(\alpha_b,\beta_b).
    \end{aligned}
\end{equation}
This is a common choice  in phylogenetic inference, having the advantage of being both tractable and fully supported on the space of binary ultrametric trees. This allows flexible Bayesian learning~of~different, possibly complex, tree architectures, while facilitating posterior computation.

Finally, for the scalar $a$ in \eqref{eq:lik_single}, we follow standard practice  and let
\begin{equation}\label{eq:priora}
    \begin{aligned}
   a \sim \mbox{N}(0,\sigma^2_a),
    \end{aligned}
\end{equation}
as in, e.g., \citet[][]{hoff2002latent}.

\vspace{-1pt}

\begin{Remark}\label{rem1}
\itshape As mentioned in Section~\ref{sec_1}, the above formulation provides a natural extension~of~the~latent space model as originally proposed in \citet[][]{hoff2002latent}. In fact, the original latent space model can~be obtained as a special case of our representation by replacing \eqref{eq:priortree} with a deterministic pre-specified tree $\Upsilon$ where all $V$ nodes split at the root so that $t_{vu} =0$~for~any pair $(v,u)$, and hence,~$\bSigma_\Upsilon$~in~\eqref{eq:bbm_row}~reduces to the identity (recall that \eqref{eq:priortree} can produce trees arbitrarily close to this constraint).  Thus the  \textsc{phylnet}  construction enlarges routinely-employed latent space formulations not only from a modeling perspective, but also in terms of inference potentials on more nuanced tree-based generative structures, well-beyond node latent features.
\end{Remark}

\begin{Remark}\label{rem2}
\itshape Although the  \textsc{phylnet}~model represents a genuine transfer of methods from phylogenetics to latent variable representations of networks, the tree $\Upsilon$ shall not be interpreted~under the evolutionary perspective of phylogenetics. Rather, within our formulation, it encodes~structured dependence among nodes in a network through a flexible architecture that characterizes nested modular hierarchies between nodes, along with the induced multiscale connectivity patterns.
\end{Remark}
\vspace{-1pt}

Equations \eqref{eq:lik_single}--\eqref{eq:priora} formalize the proposed  \textsc{phylnet} representation and provide a flexible generative~latent space characterization of complex multiscale network structures that unveil informative tree-based modular hierarchies among the nodes, encoded in $\Upsilon$. Learning this parameter under  \eqref{eq:lik_single}--\eqref{eq:priora} is, in principle, possible by  combining results from classical latent space models for network data  \citep{hoff2002latent} and Bayesian phylogenetic inference \citep[see, e.g.,][]{chen2014bayesian}. However,~as~previously~discussed, unlike for classical evolutionary models, in the   \textsc{phylnet} formulation~the~tree regulates the formation of features that are not directly observed, but rather latent and identifiable only  up to  translations~and rotations \citep{hoff2002latent}. This motivates innovations in terms of posterior~computation~(see~Section~\ref{sec_phy_31}) and, more crucially, requires~theory guaranteeing that inference~on~the~tree $\Upsilon$ parameterizing the formation process of the latent features is not affected by the identifiability~issues of these features in the likelihood.

\vspace{5pt}

\subsection{Identifiability of the tree $\Upsilon$}\label{indet}
Consistent with the discussion above, Theorem~\ref{thm:s2_g} and Proposition~\ref{prop:matrix_normal} clarify that, despite the identifiability issues of the latent features, the tree  $\Upsilon$ and the rate $\sigma^2$ of the Brownian motion~remain identifiable. These two results rely on Lemmas \ref{prop:constant} and \ref{thm:ident_lat} below, which state properties of direct interest for the broader class of latent space models  \citep{hoff2002latent}, beyond \textsc{phylnet}. More specifically, Lemma~\ref{thm:ident_lat} exploits a geometrical result for Euclidean distance matrices  \citep[e.g.,][]{hayden1991cone} provided in Lemma \ref{prop:constant}, to show that  the scalar $a$ and the $V\times V$ matrix $\bD$ of distances $[\bD]_{vu}=d_{vu}= \lVert\bz_v - \bz_u \lVert$ are identifiable under \eqref{eq:lik_single}. See~the~Supplementary Material for proofs.

\begin{Lemma}\label{prop:constant}
\itshape Consider a $V\times V$ Euclidean distance matrix $\bD^2$ which admits a representation in $\R^K$,~i.e., there exist $\bz_1,\dots,\bz_V\in \R^K$ such that $\forall v,u, [\bD^2]_{vu} =  \lVert\bz_v - \bz_u \lVert^2$. Let $\delta$ be the minimum non-zero value in $\bD$, i.e. $\delta = \min_{u\neq v} [\bD]_{vu}$.
Let $\widetilde{\bD}$ be the $V\times V$ matrix defined as
\begin{equation} \label{D_id}
    \begin{aligned}
        [\widetilde{\bD}]_{vu} = [\bD]_{vu} + c \ \ \ \text{if}\  v\neq u, \qquad \quad
        [\widetilde{\bD}]_{vv} = 0,
    \end{aligned}
\end{equation} 
for some $c > - \delta$ and $c\neq0$. Then, if $V>2K+1$, $\widetilde{\bD}^2$ does not admit a representation in $\R^{K}$.
\end{Lemma}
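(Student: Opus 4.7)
The plan is to reduce the question to a rank-plus-PSD condition on the doubly-centered squared-distance matrices via Schoenberg's characterization, use a dimension count to produce a vector in the common kernel of the two Gram matrices, and then derive a contradiction by applying Schoenberg's negative-type inequality to both $\bD$ and $\widetilde{\bD}$.

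Concretely, let $\bJ := \bI_V - V^{-1}\bone\bone^{\intercal}$ and set $\bG := -\tfrac{1}{2}\bJ\bD^{2}\bJ$ and $\widetilde{\bG} := -\tfrac{1}{2}\bJ\widetilde{\bD}^{2}\bJ$, where $\bD^{2}$ denotes the entrywise square. Expanding $\widetilde{\bD}^{2} = \bD^{2} + 2c\bD + c^{2}(\bone\bone^{\intercal}-\bI)$ and using $\bJ\bone = \bzero$ together with the identity $\bJ(\bone\bone^{\intercal}-\bI)\bJ = -\bJ$, a direct computation gives
\begin{equation*}
\widetilde{\bG} \;=\; \bG \,-\, c\,\bJ\bD\bJ \,+\, \frac{c^{2}}{2}\bJ.
\end{equation*}
By the classical Schoenberg characterization, $\bD$ admits a representation in $\R^{K}$ if and only if $\bG\succeq 0$ and $\mathrm{rank}(\bG)\le K$, with the analogous equivalence for $\widetilde{\bD}$ and $\widetilde{\bG}$. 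I argue by contradiction: suppose $\widetilde{\bD}^{2}$ admits a representation in $\R^{K}$, so that both $\bG$ and $\widetilde{\bG}$ have rank at most $K$.

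The key combinatorial input is a dimension count on $\bone^{\perp}$. Since $\bG\bone = \widetilde{\bG}\bone = \bzero$ and both have rank at most $K$, each of $\ker\bG\cap\bone^{\perp}$ and $\ker\widetilde{\bG}\cap\bone^{\perp}$ is a subspace of the $(V-1)$-dimensional space $\bone^{\perp}$ of dimension at least $V-1-K$. Inclusion–exclusion then yields
\begin{equation*}
\dim\bigl(\ker\bG \,\cap\, \ker\widetilde{\bG} \,\cap\, \bone^{\perp}\bigr) \;\ge\; 2(V-1-K) \,-\, (V-1) \;=\; V-1-2K \;>\; 0,
\end{equation*}
where strict positivity uses the hypothesis $V>2K+1$. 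Pick any non-zero $\bv$ in this intersection. Substituting $\bG\bv = \bzero$ and $\bJ\bv = \bv$ into $\widetilde{\bG}\bv = \bzero$ gives $\bJ\bD\bv = (c/2)\bv$, and taking the inner product with $\bv$ produces the scalar identity $\bv^{\intercal}\bD\bv = (c/2)\lVert\bv\rVert^{2}$.

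The last step converts this identity into a contradiction via Schoenberg's $1$-negative-type inequality: for every Euclidean distance matrix $\bA$ and every $\bv\in\bone^{\perp}$, one has $\bv^{\intercal}\bA\bv\le 0$. The hypothesis $c>-\delta$ ensures that $\widetilde{\bD}$ has non-negative off-diagonal entries and is therefore, under the contradiction hypothesis, an honest Euclidean distance matrix to which the inequality applies. Applied to $\bD$ it yields $(c/2)\lVert\bv\rVert^{2}\le 0$, hence $c\le 0$; applied to $\widetilde{\bD}$, together with
\begin{equation*}
\bv^{\intercal}\widetilde{\bD}\bv \;=\; \bv^{\intercal}\bD\bv \,-\, c\lVert\bv\rVert^{2} \;=\; -\tfrac{c}{2}\lVert\bv\rVert^{2},
\end{equation*}
it yields $c\ge 0$. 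Therefore $c=0$, contradicting $c\neq 0$. The main obstacle I anticipate is invoking Schoenberg's inequality in its non-squared form $\bv^{\intercal}\bD\bv\le 0$, rather than the more familiar $\bv^{\intercal}\bD^{2}\bv\le 0$; this strengthening is the snowflake consequence of Schoenberg's theorem for Euclidean metrics, and is precisely the point at which the assumption $c>-\delta$ (needed to make $\widetilde{\bD}$ itself a bona-fide distance matrix) becomes essential. The dimension count and the algebraic identity for $\widetilde{\bG}$ are routine by comparison.
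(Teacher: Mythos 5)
Your proof is correct and follows essentially the same route as the paper's: the same expansion of the doubly-centered Gram matrix $\widetilde{\bG}=\bG-c\,\bJ\bD\bJ+\tfrac{c^2}{2}\bJ$, the same key external fact (Schoenberg's snowflake result that $\bD^2$ Euclidean implies $\bD$ Euclidean, hence $-\tfrac12\bJ\bD\bJ\succeq 0$, i.e.\ $\bv^{\intercal}\bD\bv\le 0$ on $\bone^{\perp}$), and the same dimension count driven by $V>2K+1$. The only difference is packaging: where the paper splits into $c>0$ (showing $\mathrm{rank}(\widetilde{\bG})>K$ by ruling out kernel vectors via positive semidefiniteness of $\bG_{1/2}$) and $c<0$ (swapping the roles of $\bD$ and $\widetilde{\bD}$), you handle both signs at once by extracting a common kernel vector and applying the negative-type inequality to both $\bD$ and $\widetilde{\bD}$, which is a slightly cleaner but not substantively different argument.
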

\vspace{5pt}

\begin{Lemma}\label{thm:ident_lat}
\itshape Consider the latent space model within \eqref{eq:lik_single} for the $V\times V$ adjacency matrix $\bY$, and~let $\bD$ denote~the $V\times V$  pairwise distance matrix having entries $[\bD]_{vu}=\lVert\bz_v - \bz_u \lVert$, for every  $v,u=1, \ldots, V$. Then, if $V > 2K + 1$,~the~parameters $a$ and $\bD$ are identifiable. More precisely, denoting with $\Prob_{a,\bD}$ the joint model for the edges in the adjacency matrix $\bY$, it holds that,
\begin{equation}
    \Prob_{a,\bD} \overset{\text{d}}{=} \Prob_{\widetilde{a},\widetilde{\bD}} \implies (a,\bD) = (\widetilde{a},\widetilde{\bD}),
\end{equation}

    \vspace{-5pt}
\noindent where $\overset{\text{d}}{=}$ denotes equality in distribution.
\end{Lemma}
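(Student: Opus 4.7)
The plan is to extract identifiability at the edge level from the hypothesized equality in distribution, and then invoke Lemma~\ref{prop:constant} to rule out any non-zero distance shift. First, I would observe that under model~\eqref{eq:lik_single} the entries of $\bY$ are conditionally independent Bernoulli variables, so the equality $\Prob_{a,\bD}\overset{\text{d}}{=}\Prob_{\widetilde{a},\widetilde{\bD}}$ forces every edge-marginal probability to agree, namely $\text{logit}^{-1}(a-d_{vu})=\text{logit}^{-1}(\widetilde{a}-\widetilde{d}_{vu})$. Strict monotonicity of the logit link then yields $\widetilde{d}_{vu}=d_{vu}+c$ for all $v\ne u$, with $c:=\widetilde{a}-a$ a single global constant independent of the pair. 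The rest of the argument is devoted to showing $c=0$; once this is established, both $a=\widetilde{a}$ and $\bD=\widetilde{\bD}$ follow immediately.

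Next, I would constrain the admissible values of $c$ and apply Lemma~\ref{prop:constant}. Let $\delta=\min_{v\ne u}d_{vu}>0$, attained at some pair $(v^\ast,u^\ast)$. Non-negativity of $\widetilde{d}_{v^\ast u^\ast}=\delta+c$ forces $c\ge-\delta$. If $c>-\delta$ and $c\ne 0$, Lemma~\ref{prop:constant} applied to $(\bD,c)$ under $V>2K+1$ asserts that $\widetilde{\bD}^2$ admits no representation in $\R^K$, which contradicts the fact that $\widetilde{\bD}$ is, by assumption, the pairwise distance matrix of some $\widetilde{\bz}_1,\dots,\widetilde{\bz}_V\in\R^K$. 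To exclude the remaining boundary case $c=-\delta$, where the hypothesis $c>-\delta$ of Lemma~\ref{prop:constant} is marginally violated, I would apply the lemma with the roles of $\bD$ and $\widetilde{\bD}$ swapped. Since at most $K+1$ points in $\R^K$ can be pairwise equidistant, the condition $V>2K+1>K+1$ forbids all entries of $\widetilde{\bD}$ from being simultaneously zero, so the minimum non-zero entry $\widetilde{\delta}$ of $\widetilde{\bD}$ is strictly positive. Consequently $-c=\delta>0>-\widetilde{\delta}$ and $-c\ne 0$, and Lemma~\ref{prop:constant} applied to $(\widetilde{\bD},-c)$ yields that the shifted matrix with off-diagonal entries $\widetilde{d}_{vu}+\delta=d_{vu}$ and zero diagonal admits no representation in $\R^K$; but this shifted matrix is exactly $\bD$, which does admit one by hypothesis. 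This contradiction forces $c=0$.

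I expect the main obstacle to be precisely the boundary case $c=-\delta$, where Lemma~\ref{prop:constant} cannot be used in its most natural direction because its hypothesis is only marginally violated. The symmetry trick above resolves it cleanly, but it does rely on the quick combinatorial remark that $V>2K+1$ precludes a degenerate configuration with $\widetilde{\delta}=0$. Every other step reduces to a direct consequence of the strict monotonicity of the logit link combined with the non-negativity of Euclidean distance matrices.
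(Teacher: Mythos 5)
Your argument is correct and follows essentially the same route as the paper's: equality of the joint laws forces equality of the edge marginals, strict monotonicity of the logit link turns this into a constant off-diagonal shift $\widetilde{\bD}=\bD+c$ with $c=\widetilde{a}-a$, and Lemma~\ref{prop:constant} rules out $c\neq 0$ once non-negativity of distances has excluded $c<-\delta$. Your extra care at the boundary $c=-\delta$ --- swapping the roles of $\bD$ and $\widetilde{\bD}$ and using the simplex bound to guarantee $\widetilde{\delta}>0$ --- is a legitimate and welcome refinement (the statement of Lemma~\ref{prop:constant} only covers $c>-\delta$), but it is the same device the paper already deploys inside Case~2 of the proof of Lemma~\ref{prop:constant}, whose argument is carried out for the full range $-\delta\le c<0$.
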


Lemma~\ref{thm:ident_lat} states a relevant identifiability result for generic latent space models. However, the tree  $\Upsilon$ and the rate $\sigma^2$ of the Brownian motion do not parameterize ${\bD}$ directly, but rather~the~features that define the distances in  ${\bD}$. Theorem~\ref{thm:s2_g} and Proposition~\ref{prop:matrix_normal} combine the result in Lemma~\ref{thm:ident_lat}~with~properties of binary trees and multivariate Gaussians to prove that  $\Upsilon$  and  $\sigma^2$  are also identifiable. 

\begin{Theorem}\label{thm:s2_g}
\itshape	Let  \smash{$\Prob_{\sigma^2,\Upsilon}$} be the  distribution of a $V\times V$ symmetric adjacency matrix $\bY$ whose binary entries are distributed~as in \eqref{eq:lik_single}, conditional on parameters $\sigma^2$ and $\Upsilon$ (with $\bZ$ integrated out under~\eqref{eq:bbm_row}).  Then, the parameters $\sigma^2$ and $\Upsilon$ are identifiable. More specifically,
    \begin{equation}
        \Prob_{\sigma^2,\Upsilon} \overset{\text{d}}{=} \Prob_{\widetilde{\sigma}^2,\widetilde{\Upsilon}} \implies (\sigma^2,\Upsilon) = (\widetilde{\sigma}^2,\widetilde{\Upsilon}),
    \end{equation}
    
    \vspace{-5pt}
\noindent where $\overset{\text{d}}{=}$ denotes equality in distribution. 
\end{Theorem}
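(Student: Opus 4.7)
The plan is to reduce the identifiability question to the behavior of the first-order marginal edge probabilities $E[y_{vu}]=\Prob_{\sigma^2,\Upsilon}(y_{vu}=1)$, which are immediately readable from $\Prob_{\sigma^2,\Upsilon}$, and to show that these alone pin down $(\sigma^2,\Upsilon)$. Conditionally on $\bD$, the edges are independent $\textsc{bern}(\theta_{vu})$ with $\theta_{vu}=\text{logit}^{-1}(a-d_{vu})$, so $E[y_{vu}]=E[\text{logit}^{-1}(a-d_{vu})]$. Under \eqref{eq:bbm_row} with $T=1$, the difference $\bz_v-\bz_u$ is $\mbox{N}_K(\bzero, 2\sigma^2(1-t_{vu})I_K)$, hence $d_{vu}$ has the law of $\sqrt{2\tau_{vu}}\,W$, where $\tau_{vu}:=\sigma^2(1-t_{vu})$ and $W$ is a chi random variable with $K$ degrees of freedom. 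Setting $\phi(\tau):=E[\text{logit}^{-1}(a-\sqrt{2\tau}\,W)]$, this yields $E[y_{vu}]=\phi(\tau_{vu})$ for every pair.

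The crux is then strict monotonicity of $\phi$. Differentiating under the expectation (justified by dominated convergence since $|\text{logit}^{-1}|\leq 1$ and $E[W]<\infty$) gives $\phi'(\tau)=-(2\tau)^{-1/2}E[\theta(1-\theta)\,W]<0$, because $\theta\in(0,1)$ and $W>0$ almost surely. Hence each $\tau_{vu}$ is uniquely recovered from $E[y_{vu}]$, so under the hypothesis $\Prob_{\sigma^2,\Upsilon}=\Prob_{\widetilde{\sigma}^2,\widetilde{\Upsilon}}$ all scalars $\tau_{vu}$ agree across the two parameter pairs.

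To separate $\sigma^2$ from the $t_{vu}$'s I exploit the ultrametric binary-tree structure: the root of $\Upsilon$ bifurcates the leaf set into two non-empty subtrees, and any leaves $v,u$ on opposite sides of the bifurcation have most recent common ancestor at the root, i.e.\ $t_{vu}=0$. Such cross-root pairs always exist, so $\max_{v\neq u}\tau_{vu}=\sigma^2$, identifying $\sigma^2$; consequently $t_{vu}=1-\tau_{vu}/\sigma^2$ for every pair. Finally, a labeled binary ultrametric tree is uniquely reconstructed from its matrix of pairwise MRCA depths: the root partition is read off as $\{t_{vu}=0\}$, and the procedure recurses on each subtree using the restricted depth matrix. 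Hence $\Upsilon$ is identified.

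The main technical point I anticipate is the strict monotonicity of $\phi$ in the second paragraph; it is what converts a single marginal probability into an injection onto the scalar $\tau_{vu}$ that packages $\sigma^2(1-t_{vu})$. Everything else is either a direct computation or the classical bijection between ultrametric dissimilarities and labeled rooted binary trees. Note that this route uses only first-order marginals and so does not rely on Lemma~\ref{thm:ident_lat}; alternatively one can first invoke that lemma to recover the full marginal law of $\bD$ from $\Prob_{\sigma^2,\Upsilon}$ and then use the mean identity $E[d_{vu}^2]=2K\tau_{vu}$ in place of $E[y_{vu}]=\phi(\tau_{vu})$, after which the rest of the argument applies unchanged.
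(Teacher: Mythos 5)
Your argument is correct and is, at its core, the same as the paper's: both proofs reduce the problem to the first-order marginals $\Prob_{\sigma^2,\Upsilon}(y_{vu}=1)$, use the chi(-square) representation of $d_{vu}$ from the branching Brownian motion (the paper's Lemma~\ref{lemma:d}) to write this marginal as a fixed strictly decreasing function of $\omega_{vu}=2\sigma^2(1-[\bSigma_\Upsilon]_{vu})$, and conclude that equality in distribution forces the whole matrix of these scalars to agree. Where you genuinely diverge is in disentangling $\sigma^2$ from the tree: the paper argues by contraposition and therefore needs its Lemma~\ref{lemma:split_root} (any two binary trees on the same leaf set share a pair whose most recent common ancestor is the root in \emph{both}), whereas you recover $\sigma^2$ constructively as $\max_{v\neq u}\tau_{vu}$ within each parametrization separately, which only needs the trivial fact that each tree has \emph{some} cross-root pair; this sidesteps the combinatorial lemma entirely and then finishes with the standard reconstruction of a labeled ultrametric tree from its cophenetic matrix (a step the paper leaves implicit when it asserts that $\Upsilon\neq\widetilde{\Upsilon}$ forces $[\bSigma_\Upsilon]_{vu}\neq[\bSigma_{\widetilde{\Upsilon}}]_{vu}$ for some pair). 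Two minor remarks: the strict monotonicity of $\phi$ is obtained more cheaply by the paper's pointwise comparison of $\mathrm{expit}\{a-\sqrt{x\,\omega}\}$ in $\omega$ (avoiding differentiation under the integral and the spurious singularity of your derivative formula at $\tau=0$); and your closing alternative route via Lemma~\ref{thm:ident_lat} is not as immediate as stated, since that lemma identifies $(a,\bD)$ in the conditional model and does not by itself yield identifiability of the mixing law of $\bD$ — but your main argument does not rely on it.
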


Although Theorem~\ref{thm:s2_g} guarantees identifiability of ($\sigma^2, \Upsilon$) under the marginalized model \smash{$\Prob_{\sigma^2,\Upsilon}$},~in practice, the  Metropolis-within-Gibbs routine in Section~\ref{sec_phy_31} samples also the latent features in  $\bZ$ to~facilitate the derivation of tractable full-conditional distributions for $\Upsilon$ and $ \sigma^2$. However,~as discussed previously, the latent features are identifiable only~up~to translations and rotations.~In classical latent space models \citep[see, e.g.,][]{hoff2002latent}, where the focus  is specifically on these latent features, this issue is addressed through a processing  that aligns the samples of  $\bZ$ via  Procrustean transformation. Crucially, such a  re-alignment~is~not~required when the focus of inference is on $\Upsilon$ and $ \sigma^2$, as in the   \textsc{phylnet} model.~In~particular, the translation aspect is addressed through the re-centering operated by the auxiliary parameters~$\mu_1, \ldots, \mu_K$ in \eqref{eq:bbm_row}. As for rotation, Proposition~\ref{prop:matrix_normal}  guarantees that the conditional distribution of the zero-centered~features~given~$\Upsilon$~and~$ \sigma^2$~is~invariant  with respect to orthogonal transformations.

\begin{Proposition}\label{prop:matrix_normal}
\itshape Let $\bZ$ be the $K\times V$ matrix with rows \smash{$\bZ_{[1]},\dots,\bZ_{[K]}\in \R^{1 \times V}$} distributed as in~\eqref{eq:bbm_row}.~Then, for any $K\times K$ orthogonal matrix $\bR$, it holds that 
$$( \bR(\bZ-\bmu \otimes \smash{{\bf 1}^\intercal_{V}} )\mid \sigma^2,\Upsilon) \overset{\text{d}}{=} (\bZ-\bmu \otimes \smash{{\bf 1}^\intercal_{V}} \mid \sigma^2,\Upsilon),$$
 where $\bmu=(\mu_1, \ldots, \mu_K)^{\intercal}$ and \  $\overset{\text{d}}{=}$ \ denotes equality in distribution.
\end{Proposition}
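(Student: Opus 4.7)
The plan is to recognize that the centered matrix $\tilde{\bZ} := \bZ - \bmu \otimes \bone^{\intercal}_V$ is a zero-mean matrix-variate Gaussian whose row covariance is isotropic (equal to $I_K$) and whose column covariance is $\sigma^2\bSigma_\Upsilon$, and then to exploit the fact that left multiplication by an orthogonal matrix preserves precisely this structure.

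First, from~\eqref{eq:bbm_row}, the rows $\tilde{\bZ}_{[k]}^{\intercal} = \bZ_{[k]}^{\intercal} - \mu_k \bone_V$ are, conditional on $(\sigma^2, \Upsilon)$, independent and identically distributed as $\mathrm{N}_V(0, \sigma^2\bSigma_\Upsilon)$ for $k = 1, \ldots, K$. Hence $\tilde{\bZ}$ is jointly Gaussian, and so is $\bR \tilde{\bZ}$ for any orthogonal $\bR$, with mean zero. I would then verify that the joint distribution of the rows of $\bR \tilde{\bZ}$ matches that of $\tilde{\bZ}$ by a direct covariance calculation: using independence of the rows of $\tilde{\bZ}$ together with $\bR\bR^{\intercal} = I_K$,
\begin{equation*}
\mathrm{Cov}\bigl((\bR\tilde{\bZ})_{[k]}^{\intercal}, (\bR\tilde{\bZ})_{[l]}^{\intercal}\bigr) = \sum_{j,m=1}^K R_{kj} R_{lm}\,\mathrm{Cov}\bigl(\tilde{\bZ}_{[j]}^{\intercal}, \tilde{\bZ}_{[m]}^{\intercal}\bigr) = (\bR\bR^{\intercal})_{kl}\,\sigma^2\bSigma_\Upsilon = \delta_{kl}\,\sigma^2\bSigma_\Upsilon,
\end{equation*}
so the rows of $\bR\tilde{\bZ}$ are themselves independent $\mathrm{N}_V(0, \sigma^2\bSigma_\Upsilon)$, matching the joint law of $\tilde{\bZ}$. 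Since both sides are zero-mean Gaussians with the same covariance, this yields the claimed distributional equality.

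There is no substantive obstacle: the statement is a direct manifestation of the fact that the prior in~\eqref{eq:bbm_row} treats the $K$ feature dimensions exchangeably with identity row covariance, and orthogonal matrices are precisely those preserving this structure. Combined with the explicit re-centering operated by $\mu_1, \ldots, \mu_K$, this confirms that the rotational indeterminacy of $\bZ$ in the underlying latent space model has no effect on the conditional law given $\Upsilon$ and $\sigma^2$, and hence that no Procrustean alignment of the $\bZ$-samples is required when posterior inference targets the phylogenetic tree.
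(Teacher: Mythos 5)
Your proof is correct and follows essentially the same route as the paper's: both identify the centered matrix as a zero-mean matrix-variate Gaussian with isotropic row covariance $\bI_K$ and column covariance $\sigma^2\bSigma_\Upsilon$, and conclude via $\bR\bI_K\bR^{\intercal}=\bI_K$. The only difference is that you verify the relevant matrix-normal transformation property by an explicit row-covariance computation, whereas the paper cites it as a standard fact.
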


\subsection{Extension to multiple network measurements}\label{sec_phy_23}

Before deriving in Section~\ref{sec_phy_3} the Metropolis-within-Gibbs routine for inference under the~\textsc{phylnet}~model, we generalize in this section the  formulation  to the case of multiple adjacency matrices  $\bY^{(1)}, \ldots, \bY^{(M)}$, each encoding different connections between the same set of nodes. In this context, which includes modern settings of direct interest, such as multiplex and replicated networks  \citep[e.g.,][]{kivela2014multilayer}, the proposed \textsc{phylnet} model not only admits~a natural generalization, but also benefits from the multiple network measurements to achieve an improved reconstruction of the underlying  tree $\Upsilon$, with theoretical guarantees of posterior consistency.

Consistent with other extensions of latent space models to multiple~network~settings \citep[][]{gollini2016joint,durante2017nonparametric,salter2017latent,wang2019common, arroyo2021inference,macdonald2022latent}, we generalize the  \textsc{phylnet} model in a way that achieves flexibility in modeling differences between the multiple networks, while learning relevant shared structures. Since the nodes are common across these networks, within our novel framework it is natural to expect that the architecture (i.e., the tree $\Upsilon$) underlying the feature~formation process is shared among  $\bY^{(1)}, \ldots, \bY^{(M)}$, with the  differences among these $M$ adjacency matrices being the result of network-specific features $\bZ^{(1)}, \ldots, \bZ^{(M)}$ evolving over  the shared~tree $\Upsilon$ . Motivated by these considerations, we let
\begin{equation}\label{eq:lik}
(y^{(m)}_{vu}\mid \theta^{(m)}_{vu}) \overset{\text{ind}}{\sim}\textsc{bern}(\theta^{(m)}_{vu}) \ \ \mbox{with} \ \    \text{logit}(\theta^{(m)}_{vu}) = a\, - \lVert\bz^{(m)}_v - \bz^{(m)}_u \lVert, \quad 1\leq u < v \leq V,
\end{equation}
independently for $m=1, \ldots, M$, where $y^{(m)}_{vu}=[\bY^{(m)}]_{vu}$, with $y^{(m)}_{vu}=y^{(m)}_{uv}$, while  $\bz^{(m)}_v \in \smash{\mathbb{R}^{K}}$~and \smash{$\bz^{(m)}_u \in \mathbb{R}^{K}$} denote~the features of nodes $v$ and $u$, respectively, in the $m$-th network. As discussed~above, these features are allowed to change across the networks to flexibly characterize differences in the edge probabilities between the adjacency matrices $\bY^{(1)}, \ldots, \bY^{(M)}$ through~a formation mechanism regulated by a shared tree architecture $\Upsilon$. Extending \eqref{eq:bbm_row}--\eqref{eq:priora} to this setting, yields
\begin{equation}\label{eq:bbm_row_multi}
    \begin{aligned}
    (\bZ^{(m)\intercal}_{[k]}\mid \mu^{(m)}_k, \sigma^2,\Upsilon) \overset{\text{ind}}{\sim}\mbox{N}_V(\mu^{(m)}_k\bone_V,\sigma^2\bSigma_\Upsilon), \qquad \text{for}\,\, k=1,\dots,K, \ \ m=1, \ldots, M,
    \end{aligned}
\end{equation}
where the locations and rate  have independent~priors
\begin{equation}\label{eq:bbm_mu_sig_multi}
    \begin{aligned}
  \quad  \mu^{(m)}_k \overset{\text{ind}}{\sim} \mbox{N}(0, \sigma_{\mu}^2), \ \ \text{for}\,\, k=1,\dots,K, \ m=1, \ldots, M, \quad \ \sigma^2 \sim \textsc{inv-gamma}(\alpha_\sigma,\beta_\sigma),
    \end{aligned}
\end{equation}
while the tree $\Upsilon$ and the scalar $a$ are assigned the same priors as in Section~\ref{sec_phy_22}, namely
\begin{equation}\label{eq:priortree_multi}
    \begin{aligned}
   (\Upsilon \mid b) \sim \textsc{bdt}(b,0), \quad b \sim \textsc{inv-gamma}(\alpha_b,\beta_b), \ \ \qquad  \quad a \sim \mbox{N}(0,\sigma^2_a). \qquad \qquad \ \ \
    \end{aligned}
\end{equation}
Model \eqref{eq:lik}--\eqref{eq:priortree_multi} is an effective extension of the  \textsc{phylnet} construction  to multiple networks (note that this extension can be made even more flexible, without major computational  complications, by letting both $\sigma^2$ and $a$ be network-specific, i.e., $\sigma^{(1)}, \ldots, \sigma^{(M)}$~and $a^{(1)}, \ldots, a^{(M)}$).

\subsection{Posterior consistency for the tree $\Upsilon$}\label{sec_consis}

Model \eqref{eq:lik}--\eqref{eq:priortree_multi} not only achieves flexibility in modeling multiple networks, but also benefits from these $M$ measurements to gain efficiency in inference~on~the~shared tree $\Upsilon$. This property is formalized in Theorem~\ref{thm:consist}, which states consistency for the posterior  on $(\sigma^2,\Upsilon)$ as $M \rightarrow \infty$.

\begin{Theorem}\label{thm:consist}
\itshape Let the tree $\Upsilon=(\mathcal{T},\bm{\lambda})$ be decomposed into its  tree topology \smash{$\mathcal{T}\in\mathbb{T}_V$}, where $\mathbb{T}_V$ is the space of binary tree topologies with $V$ terminal nodes, and the collection of branch lengths $\bm{\lambda}\in\mathbb{R}_{+}^{2V-2}$ satisfying the ultrametric property. Then, it holds that $\forall \mathcal{T}_0\in\mathbb{T}_V$, there exists $H_0\subset \mathbb{R}_+\times\mathbb{R}_{+}^{2V-2}$ with \smash{$\Pi_{\sigma^2,\bm{\lambda}\mid\mathcal{T}_0}(H_0)=1$}, such that $\forall(\sigma^2_0,\bm{\lambda}_0)\in H_0$ if $\bY_1,\bY_2,\dots\sim \smash{\mathbb{P}_{\sigma_0^2,\Upsilon_0}}$ i.i.d. with $\Upsilon_0=(\mathcal{T}_0,\bm{\lambda}_0)$, then for any neighborhood $B$ of \smash{$(\sigma_0^2,\Upsilon_0)$}, 
\begin{equation}
    \lim_{M\rightarrow \infty} \mathbb{P}[(\sigma^2,\Upsilon)\in B \mid \bY_1, \dots, \bY_M] = 1 \quad a.s.-\mathbb{P}_{\sigma_0^2,\Upsilon_0},
\end{equation}
where $\Pi_{\sigma^2,\bm{\lambda}\mid\mathcal{T}_0}$ is the conditional distribution of $(\sigma^2,\bm{\lambda})\mid\mathcal{T}=\mathcal{T}_0$ under prior \eqref{eq:bbm_mu_sig_multi}--\eqref{eq:priortree_multi}.
\end{Theorem}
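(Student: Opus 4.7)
The conclusion is an almost-sure posterior consistency statement whose ``there exists $H_0$ with $\Pi_{\sigma^2,\bm{\lambda}\mid\mathcal{T}_0}(H_0)=1$'' structure is the hallmark of \emph{Doob's consistency theorem}. The plan is to apply Doob's theorem directly to the marginal law of $\bY$ under \eqref{eq:lik}--\eqref{eq:priortree_multi}, obtained after integrating out the nuisance quantities $\bZ^{(m)}$ and $\bmu^{(m)}$, using Theorem~\ref{thm:s2_g} as the identifiability ingredient.

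First I would fix the parameter space as $\Theta = \mathbb{R}_+\times\mathcal{U}_V$, where $\mathcal{U}_V$ denotes the space of rooted binary ultrametric trees on $V$ labeled leaves. Since $\mathbb{T}_V$ is finite, $\mathcal{U}_V$ decomposes as a finite disjoint union of Borel subsets of $\mathbb{R}_+^{2V-2}$ (one slice per topology) and therefore inherits a natural Polish topology. Borel measurability of the sampling map $(\sigma^2,\Upsilon)\mapsto\mathbb{P}_{\sigma^2,\Upsilon}$ follows from the explicit Gaussian-then-Bernoulli structure of \eqref{eq:lik}--\eqref{eq:bbm_row_multi}, while injectivity of this map on $\Theta$ is precisely the content of Theorem~\ref{thm:s2_g}; injectivity of the single-network law immediately transfers to the i.i.d.\ product law on $(\bY_1,\bY_2,\dots)$.

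With these ingredients in place, Doob's theorem delivers an exceptional set $N\subset\Theta$ with $\Pi(N)=0$ such that, for every $(\sigma_0^2,\Upsilon_0)\notin N$, the posterior $\Pi(\cdot\mid\bY_1,\ldots,\bY_M)$ converges weakly to $\delta_{(\sigma_0^2,\Upsilon_0)}$ for $\mathbb{P}_{\sigma_0^2,\Upsilon_0}$-a.e.\ sample sequence; equivalently, $\Pi(B\mid\bY_1,\ldots,\bY_M)\to 1$ for every open neighborhood $B$ of $(\sigma_0^2,\Upsilon_0)$. To convert this unconditional ``almost every'' into the conditional-on-$\mathcal{T}_0$ form stated in the theorem, I would exploit that the Yule prior \eqref{eq:priortree_multi} assigns strictly positive mass $\Pi(\mathcal{T}=\mathcal{T}_0)>0$ to every labeled binary topology on $V$ leaves. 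The conditional prior $\Pi_{\sigma^2,\bm{\lambda}\mid\mathcal{T}_0}$ is then absolutely continuous with respect to the restriction of $\Pi$ to the slice $\{\mathcal{T}=\mathcal{T}_0\}$, so the set
\[ H_0 \;=\; \{(\sigma^2,\bm{\lambda})\in\mathbb{R}_+\times\mathbb{R}_+^{2V-2} : (\sigma^2,(\mathcal{T}_0,\bm{\lambda}))\notin N\} \]
satisfies $\Pi_{\sigma^2,\bm{\lambda}\mid\mathcal{T}_0}(H_0)=1$, and Doob's conclusion applied on every $(\sigma_0^2,\bm{\lambda}_0)\in H_0$ yields the stated limit.

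The main obstacle is the mild topological bookkeeping around the hybrid discrete--continuous tree space: one must verify that the natural Polish topology on $\mathcal{U}_V$ aligns the ``neighborhoods'' $B$ appearing in the theorem with those for which Doob's weak-convergence conclusion is available. Since $\mathcal{U}_V$ splits into finitely many Euclidean slices and the identification map $(\sigma^2,\Upsilon)\mapsto\mathbb{P}_{\sigma^2,\Upsilon}$ is continuous on each slice, this reduces to routine measurability checks rather than substantive analysis; no Schwartz-style KL-support or uniform-testing argument is needed because identifiability alone suffices under the Doob route.
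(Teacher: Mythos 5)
Your proposal is correct and follows essentially the same route as the paper: Doob's consistency theorem applied to the marginalized model with identifiability supplied by Theorem~\ref{thm:s2_g}, followed by the observation that the Yule prior places strictly positive mass on each of the finitely many topologies, so the Doob-exceptional null set restricts to a conditional-null set on each topology slice, yielding the set $H_0$. The paper phrases the slice-wise step as a finite decomposition of $\Pi(\mathbb{S})=1$ into conditional expectations weighted by $\Pi_{\mathcal{T}}(\{\mathcal{T}_j\})>0$, which is the same argument as your absolute-continuity remark.
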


The proof of Theorem~\ref{thm:consist} can be found in the Supplementary Material and exploits the finiteness of the space of tree topologies with $V$ nodes, together with Doob's consistency theorem \citep{doob1949application,ghosal2017fundamentals} and the identifiability results in Section~\ref{indet}. Albeit stated~for~a~single adjacency matrix, these identifiability results  extend to the multiple networks setting.

\begin{figure}[t]
    \centering
    \includegraphics[width = 1.02\linewidth]{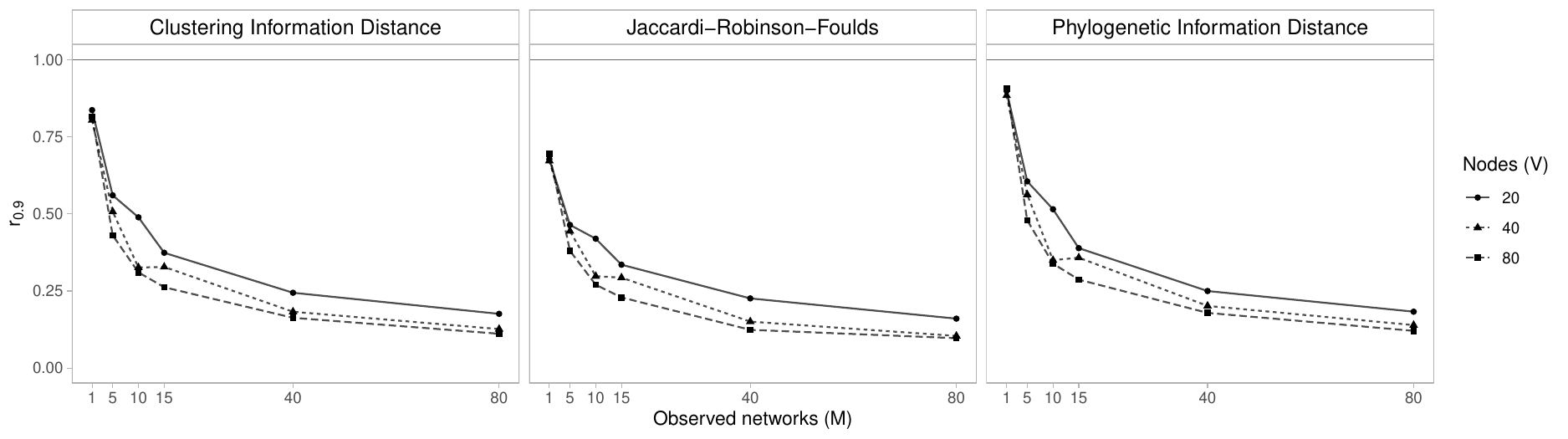}
    \vspace{-10pt}
    \caption{Radius of the $90\%$ credible sets centered at $\Upsilon_0$. This radius is computed under different tree distances and for varying settings of $M =1, 5, 10, 15, 40, 80$ and $V=20,40,80$.}
    \label{fig:consistency}
\end{figure}

Although Theorem~\ref{thm:consist}  provides strong theoretical support to the proposed construction, the consistency result stated  is asymptotic in nature. In practice, $M$ is finite and, hence, it~is~of~interest to assess whether theoretical consistency translates into empirical evidence of effective concentration for the posterior  of $\Upsilon$ around a true tree $\Upsilon_0$. To answer this question, we  simulate networks from model \eqref{eq:lik}--\eqref{eq:bbm_mu_sig_multi} for different settings~of $M =1, 5, 10, 15, 40, 80$~and~$V =20,40,80$,  letting $K=3$ and {\em true} underlying tree architecture $\Upsilon_0$ drawn from prior \eqref{eq:priortree_multi}. 
Conditioned~on~these~networks, we sample multiple trees from the corresponding posterior distribution, as detailed~in~Section~\ref{sec_phy_31}, and then leverage these samples to monitor the concentration of the posterior around $\Upsilon_0$ as $M$ grows, for different settings of $V$, via the radius of the $90\%$ credible sets centered at  $\Upsilon_0$;~see Figure~\ref{fig:consistency}. These~credible~sets contain the $90\%$  sampled trees closest to $\Upsilon_0$ according~to~three~standard notions of normalized tree distance (i.e., the Clustering Information Distance \citep{smith2020information}, the Jaccardi--Robinson--Foulds \citep{bocker2013generalized} and  the Phylogenetic Information Distance \citep{smith2020information},  as implemented in the \texttt{R} package~\texttt{TreeDist} \citep{TreeDist}), whereas the radius coincides with the maximum of the distances between  $\Upsilon_0$  and the trees within~the~$90\%$ credible set. 
As illustrated in Figure~\ref{fig:consistency}, the radius progressively shrinks as $M$ grows, for any~$V$, meaning that the posterior increasingly concentrates around $\Upsilon_0$, thus providing empirical support~in~finite-$M$ settings to Theorem~\ref{thm:consist}. Note that the larger radius at $M=1$ is due to the fact~that,~in this setting, the model has only $M\times K=3$ latent features for each node to learn a complex tree~structure. This translates into higher posterior uncertainty. Such a result does not mean that reasonable~point estimates of the tree cannot be obtained when $M=1$. Rather, it clarifies that the \textsc{phylnet} construction properly quantifies posterior uncertainty.

\section{Posterior Computation and Inference}\label{sec_phy_3}

\setlength\abovedisplayskip{15pt}%
    \setlength\belowdisplayskip{15pt}%

\subsection{Posterior computation via Metropolis-within-Gibbs}\label{sec_phy_31}
Due to the intractability of the posterior distribution induced by model  \eqref{eq:lik}--\eqref{eq:priortree_multi}, we design~a Metropolis-within-Gibbs procedure targeting this posterior. Such a routine samples  iteratively from the full conditional distributions of the model parameters, either via conjugate updates or through Metropolis--Hastings steps with tailored Gaussian proposals, except for $\Upsilon$ which requires suitable tree moves.~Note that, although our overarching focus is on the posterior distribution for $\Upsilon$,~as illustrated in the following, it is practically more convenient to implement a routine targeting~the full posterior induced~by the model  \eqref{eq:lik}--\eqref{eq:priortree_multi},  and then retain only the samples of $\Upsilon$~to~perform~inference on the marginal posterior for the tree as detailed in Section~\ref{sec_phy_32}.

In the following, we detail~the~steps of our algorithm for a generic~$M$; setting $M=1$~yields~directly the sampling strategy for the single-network \textsc{phylnet}  model in \eqref{eq:lik_single}--\eqref{eq:priora}. Focusing first on the scalar~$a$, its full-conditional distribution
\begin{eqnarray*}
\pi(a \mid -) \propto \phi(a; 0, \sigma_a^2)\prod_{m=1}^M\left[\prod_{v=2}^V \prod_{u=1}^{v-1} \frac{(\exp[a\, - \lVert\bz^{(m)}_v - \bz^{(m)}_u \lVert])^{y^{(m)}_{vu}}}{1+\exp[a\, - \lVert\bz^{(m)}_v - \bz^{(m)}_u \lVert]}\right],
\end{eqnarray*}
lacks conjugacy between the Gaussian density $\phi(a; 0, \sigma_a^2)$ and the model likelihood, thereby~requiring a Metropolis--Hastings update. This update relies on a random walk Gaussian proposal with~variance tuned as in  \citet{andrieu2008tutorial} to target the ideal acceptance rate of $\bar{\alpha}=0.23$, which ensures an effective balance  between local and long-distance exploration.~In~practice, this is achieved by defining the standard deviation \smash{$\eta_a^{(s)}$} of the Gaussian proposal at the step $s$  as \smash{$\log \eta_a^{(s)}=\log \eta_a^{(s-1)}+s^{-0.8}(\alpha^{(s)}-\bar{\alpha})$}, where \smash{$\alpha^{(s)}$} is the acceptance probability~at~iteration~$s$. 

The above tuning step is implemented not only for  $a$, but also for all the quantities in model~\eqref{eq:lik}--\eqref{eq:priortree_multi}  sampled via Metropolis--Hastings based on Gaussian proposals.~This~is~the~case of the latent~features encoded in the $K \times V$ matrix $\bZ^{(m)}$, for $m=1, \ldots, M$, whose full-conditional is
\begin{eqnarray*}
\begin{split}
\pi(\bZ^{(m)} \mid -) \propto \left[\prod_{k=1}^K \phi_V(\bZ^{(m)\intercal}_{[k]}; \mu^{(m)}_k\bone_V,\sigma^2\bSigma_\Upsilon)\right]\cdot\left[\prod_{v=2}^V \prod_{u=1}^{v-1} \frac{(\exp[a\, - \lVert\bz^{(m)}_v - \bz^{(m)}_u \lVert])^{y^{(m)}_{vu}}}{1+\exp[a\, - \lVert\bz^{(m)}_v - \bz^{(m)}_u \lVert]}\right],
\end{split}
\end{eqnarray*}
for each $m=1, \ldots, M$. Crucially, this form allows to implement parallel Metropolis--Hastings~updates for \smash{$\pi(\bZ^{(m)}\mid -)$} over $m=1,\dots,M$, with each parallel update sampling jointly from~the~$K$-dimensional features vector \smash{$\bz^{(m)}_v$} of every node $v$, for $v=1, \ldots, V$.

Given the above samples, the rate \smash{$\sigma^2$} and the centering parameters \smash{$\mu^{(m)}_k \in \mathbb{R}$}, for $k=1, \ldots, K$,~$m=1,\ldots, M$, admit conjugate inverse-Gamma and Gaussian full-conditionals, respectively. More specifically, let $\alpha^*_{\sigma}=\alpha_\sigma +VKM/2$ and $\beta^*_{\sigma}=\beta_\sigma + \,\bar{\bz}^\intercal \left( \bI_{KM}\otimes \bSigma_\Upsilon^{-1} \right) \bar{\bz}/2$, then
\begin{eqnarray*}
\begin{split}
&\pi(\sigma^{-2} \mid -) = \frac{ (\beta^*_{\sigma})^{\alpha^*_{\sigma}}}{\Gamma(\alpha^*_{\sigma})} (\sigma^{-2})^{\alpha^*_{\sigma}-1} \exp(-\beta^*_{\sigma}\sigma^{-2}), \quad \mbox{and}\\
&\pi(\mu^{(m)}_k \mid -)= \phi\bigg(\mu^{(m)}_k; \frac{\bone^{\intercal}_V(\sigma^{2}\bSigma_\Upsilon)^{-1}\bZ^{(m)\intercal}_{[k]}}{\sigma_\mu^{-2}+\bone^{\intercal}_V(\sigma^{2}\bSigma_\Upsilon)^{-1} \bone_V}, \frac{1}{\sigma_\mu^{-2}+\bone^{\intercal}_V(\sigma^{2}\bSigma_\Upsilon)^{-1} \bone_V}\bigg),
\end{split}
\end{eqnarray*}
for  every $m=1,\dots,M$ and $k=1,\dots,K$, where   $ \bar{\bz} = \smash{(\bar{\bz}^{(1)\intercal}_1,\dots,\bar{\bz}^{(1)\intercal}_K,\dots,\bar{\bz}^{(M)\intercal}_1,\dots,\bar{\bz}^{(M)\intercal}_K)^\intercal} \in \R^{V K M}$,  with \smash{$\bar{\bz}^{(m)}_k=\bZ^{(m)\intercal}_{[k]} - \mu^{(m)}_k\bone_V$}. Adapting popular routines for latent space model  \citep[e.g.,][]{krivitsky2009representing}, the above updates are combined with a subsequent~Metropolis--Hastings step~re-sampling~the quantities \smash{$(\bZ^{(1)},\dots,\bZ^{(M)},\bmu^{(1)},\dots,\bmu^{(M)},\sigma^2)$} jointly in order to~improve mixing. This  is accomplished by proposing a~modified version of the sampled values~for $\bZ^{(1)},\dots,\bZ^{(M)}$, $\bmu^{(1)},\dots,\bmu^{(M)}$ and $\sigma^2$, jointly rescaled by $h \sim \mbox{N}(0,\sigma^2_{h})$, and then accepting~or rejecting~the~proposed values based on the corresponding Metropolis--Hastings acceptance probability. 

In order to sample from the full-conditional of  $\Upsilon$, we adapt strategies available in the literature on Bayesian phylogenetic trees \citep[e.g.,][]{kelly2023lagged,chen2014bayesian} treating the sampled features  $\bZ^{(1)}, \ldots, \bZ^{(M)}$ as observed traits. We rely, in particular, on Metropolis--Hastings~updates based on five symmetric moves that ensure ergodicity of the chains in the tree space:

\vspace{-6pt}
\begin{itemize}
\item \texttt{Tips interchange}: for each leaf node randomly select another leaf and swap them;
\vspace{-5pt}
\item \texttt{Subtree exchange}: randomly select two disjoint subtrees and swap them;
\vspace{-5pt}
\item \texttt{Tree-node age move}: randomly select an internal node and shift its age, corresponding~to~the~expansion or contraction of the branches connecting the selected node, its parent in the tree and~the child nodes, keeping unchanged the total height of the tree;
\vspace{-5pt}
\item \texttt{Subtree pruning and regrafting (\textsc{spr})}: randomly select a subtree, prune it and re-attach~the resulting subtree to another suitable position in the tree;
\vspace{-5pt}
\item \texttt{Local-\textsc{spr}}: \textsc{spr} move with the restriction of regrafting in any suitable branch of the subtree~rooted at the parent node of the pruned subtree.
\end{itemize}

To enhance mixing and convergence, we follow recommended practice in Bayesian phylogenetics and implement all the above moves for a pre-specified number of times in each step~of~the~Gibbs~routine, while checking that the proposed move does not  violate the ultrametric property. 

Finally, given the sample of the tree $\Upsilon$, the associated prior hyperparameter $b$ is drawn~via~a~random walk Metropolis--Hastings update from the full conditional
\begin{eqnarray*}
\pi(b \mid -) \propto \frac{\beta_b^{\alpha_b}}{\Gamma(\alpha_b)} \frac{\exp(-\beta_b/b)}{b^{\alpha_b+1}} \pi_{\textsc{bdt}}(\Upsilon; b,0),
\end{eqnarray*}
where $\pi_{\textsc{bdt}}(\Upsilon; b,0)$ denotes the density of a birth and death process with rates $b$ and $0$ computed at the tree $\Upsilon$. In this case we propose $\log b$ again from  a Gaussian. 

The above Metropolis-within-Gibbs is initialized with random draws from the prior~for~all~the parameters, except for $a$ whose starting value is set at the median of the \textsc{mle} estimates provided by $M$ different latent space models \citep{hoff2002latent} fitted separately via the \texttt{R}-package \texttt{latentnet} \citep{latentnet} to each observed network $\bY^{(1)}, \ldots, \bY^{(M)}$. The sampling steps discussed above are performed in random order within  each full cycle of the Metropolis-within-Gibbs routine. In our experience, this choice helped to further improve mixing.

\subsection{Posterior inference on $\Upsilon$}\label{sec_phy_32}
As mentioned in Section~\ref{sec_1},  the architecture of the tree $\Upsilon$ allows us to unveil increasingly-nested~modular hierarchies among nodes that inform on multiscale network structures often observed in practice. This motivates our overarching focus on posterior inference for such an architecture, via the posterior~samples for  $\Upsilon$ produced by the Metropolis-within-Gibbs outlined in Section~\ref{sec_phy_31}. 

Our aim is to obtain not only a point estimate of $\Upsilon$, but to fully quantify posterior uncertainty. To this end, we rely on  two standard summaries of the sample of trees: {\em DensiTrees} and~{\em consensus trees} (see Figures \ref{figure:crime_2} and  \ref{figure:brain_2}).
A {\em DensiTree} \citep{bouckaert2010densitree} allows full visualization of~the posterior uncertainty, by overlaying all  trees in the sample under a common ordering~of~the~leaves. This~is~a~useful visualization, but can sometimes be so blurry as to be difficult to interpret.~A~{\em consensus~tree},~on~the~other hand, summarizes the posterior via a single tree.~This~is~accomplished by displaying only the splits that appear within the sampled trees  in a proportion above a threshold $p$. Hence, some  internal~nodes of the consensus~tree~may~be~multifurcating when no split has high enough empirical posterior probability, thereby facilitating the identification of those parts of the~tree  that are uncertain or poorly-resolved in the posterior. The branch~lengths~of the consensus trees are  computed as the mean of the lengths~of~the~corresponding branches~in~the~tree~samples.

See Figures \ref{figure:crime_2} and  \ref{figure:brain_2} for a graphical representation of the {\em DensiTrees} and  {\em consensus trees} resulting from the application of the  \textsc{phylnet} model  to networks from criminology and neuroscience.

\section{Simulation Studies}\label{sec_phy_4}
We  assess the performance of the proposed  \textsc{phylnet} model and  its empirical improvements~over~alternative competitors in two simulation scenarios that exhibit different network structures. 
In our first simulation scenario (see Figure~\ref{figure:simu_1}), there is no multiscale pattern. We~consider, in particular, $M=10$ networks having $V=80$ nodes whose connections are sampled from independent Bernoulli~variables with edge probabilities displaying community-type architectures between 5 groups, with no evidence of multiscale patterns. 
In the second simulation scenario, we consider  instead $M=30$~networks of size $V=60$  simulated under the data generative process of the  \textsc{phylnet} model discussed in Section~\ref{sec_phy_23}. More specifically, we first generate the true underlying tree $\Upsilon_0$ from~the~prior in \eqref{eq:priortree_multi} with $b_0=0.6$. Conditioned on this tree, the node-specific latent features are simulated,~across the $M=30$ networks, from \eqref{eq:bbm_row_multi}, setting \smash{$\sigma_0^2=0.6$ and $\mu^{(m)}_{k,0}=0$}, for all $m=1, \ldots, 30$ and $k=1, \ldots, 3$. Finally,~given these simulated features, the edges within the  adjacency matrices $\bY_1, \ldots, \bY_{30}$ are generated from the latent space model in \eqref{eq:lik}, with $a_0=2.6$. Figure~\ref{figure:simu_2} provides a graphical illustration of $\Upsilon_0$,~along~with~the~expectation of the $M$  edge-probability~matrices and examples of four simulated adjacency matrices~in~the second scenario.~Unlike~the first scenario, in this case~the~nodes~display  heterogeneous~and~more~nuanced multiscale patterns, thus allowing to assess  the \textsc{phylnet} construction  both~in situations where networks are not simulated from the generative process underlying the proposed model  (first scenario), and~also~in~correctly-specified, yet challenging, settings (second~scenario).   

\begin{figure}[t!]
\centering
    \includegraphics[trim=0cm 0cm 0cm 0cm,clip,width=0.95\textwidth]{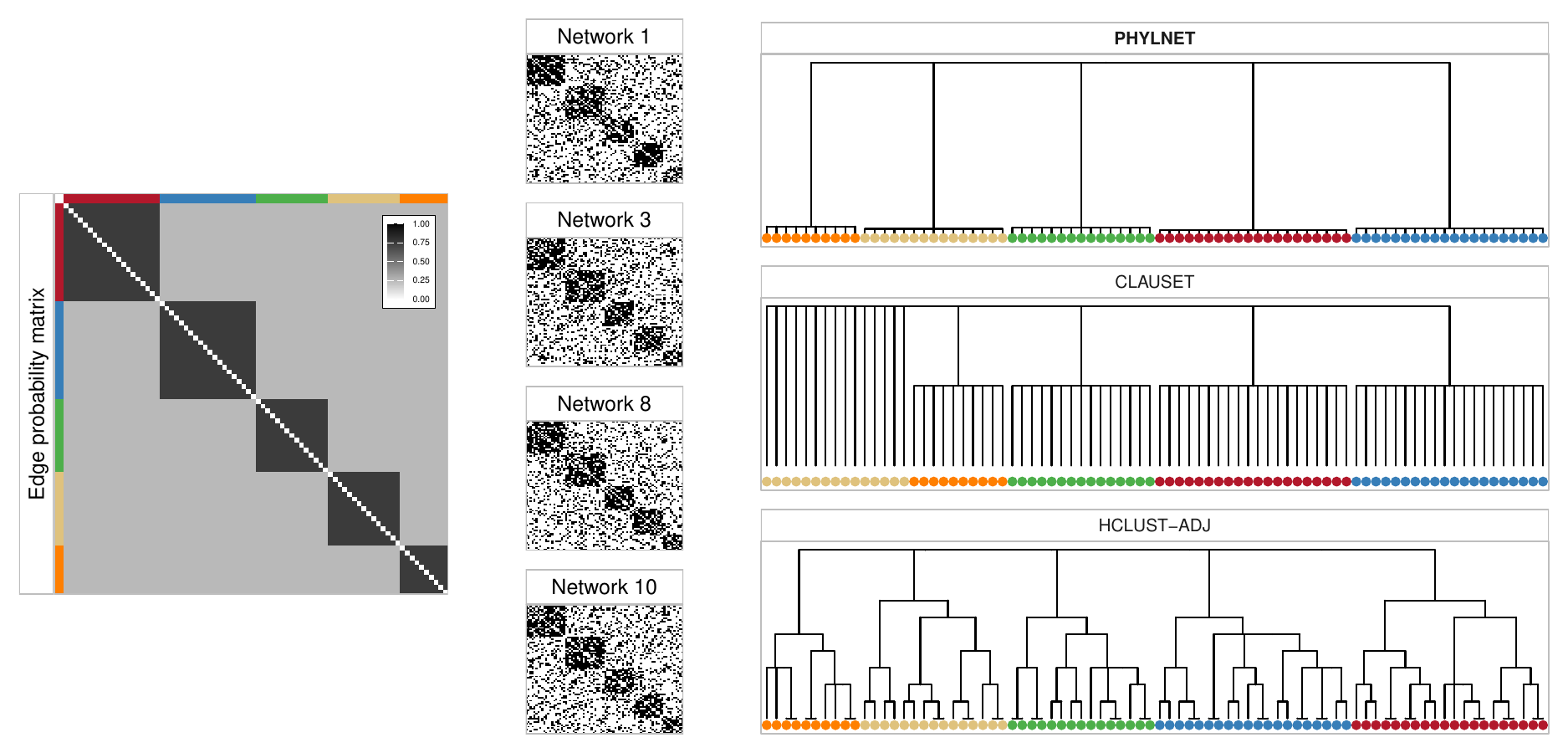}
    \caption{First simulation scenario: community-type structure. From left to right: matrix of edge probabilities;~four~examples of simulated networks; inferred trees under the   \textsc{phylnet} model and alternative competitors.~Colors~indicate~community membership. For the \textsc{phylnet} model, the {\em consensus tree} is based on a 0.8 threshold proportion.}
    \label{figure:simu_1}
\end{figure}

Posterior inference under the \textsc{phylnet} model proceeds via the Metropolis-within-Gibbs outlined in Section~\ref{sec_phy_31}, setting  $\alpha_b = \alpha_\sigma = \beta_b = \beta_\sigma = 1$, $\sigma_a=10$ and \smash{$\sigma_{\mu}=\sqrt{1000}$}. Although~in~our experiments the  \textsc{phylnet}  model proved robust to moderate deviations of such hyperparameter settings, these~default values have always led to sensible results when applied to substantially~different~networks~in~both~simulations and applications. Regarding the choice of $K$, we implemented separate latent space models~as in \citet{hoff2002latent} (using the \texttt{latentnet} \texttt{R}--package) for each of the simulated networks in both scenarios to explore how different settings for $K$ were able to characterize the structure of~the observed networks. This procedure led to select $K=3$, which was shown to achieve a sensible balance between dimensionality reduction and goodness-of-fit. For both scenarios, posterior inference relies on 4 chains of $10^5$ iterations, with a burn-in window of 75{,}000 and a thinning every~50 iterations, as suggested by \textsc{mcmc} diagnostics. Note that these conservative \textsc{mcmc}  settings are common in Bayesian phylogenetic inference, due to the complex topology of tree spaces  \citep[e.g.,][]{chen2014bayesian}.

\begin{figure}[t!]
\centering
    \includegraphics[trim=0cm 0cm 0cm 0cm,clip,width=0.95\textwidth]{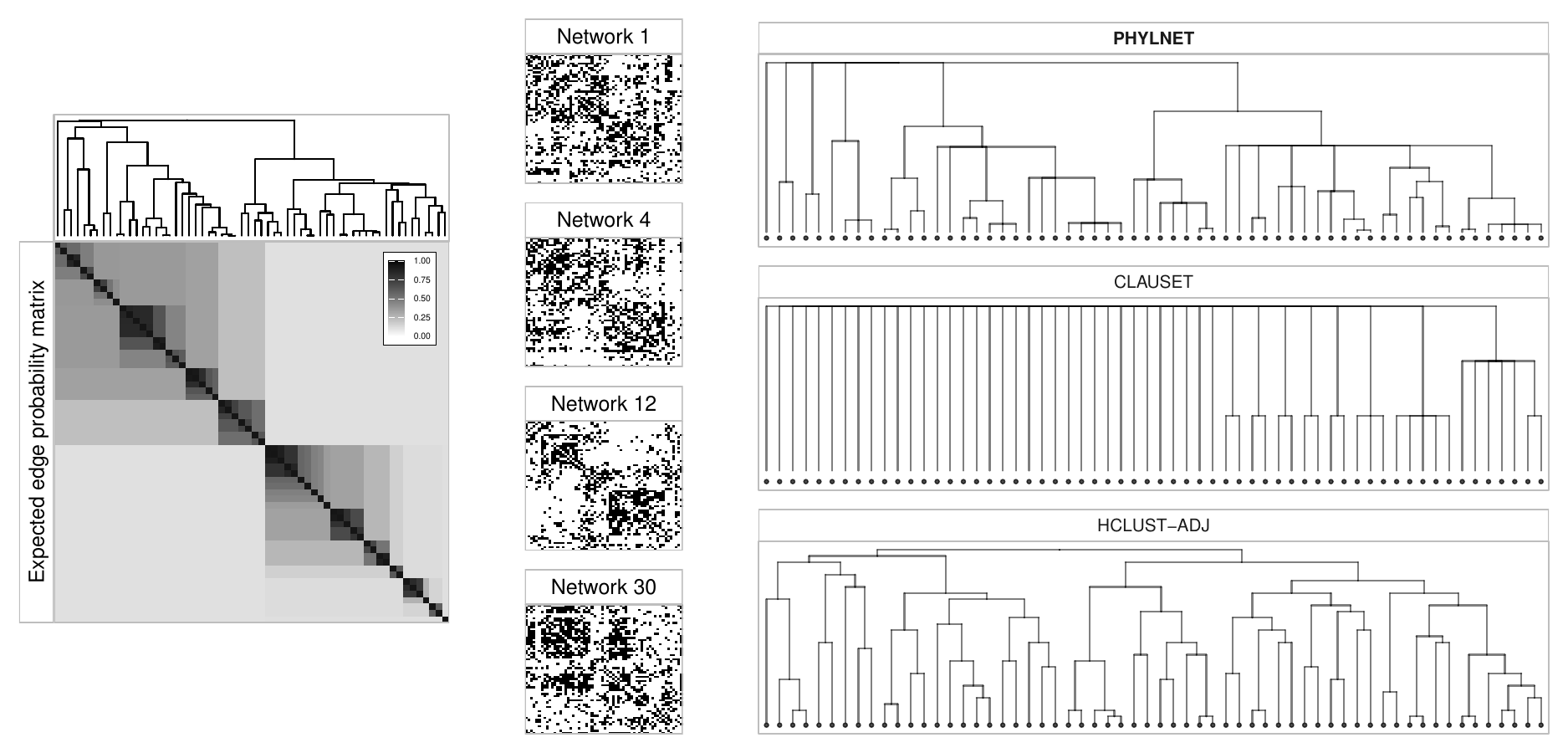}
    \caption{Second simulation scenario: tree-type multiscale structure. From left to right: True tree $\Upsilon_0$ and expectation~of the $M$  edge-probability matrices; four examples of simulated networks; inferred trees under the proposed  \textsc{phylnet} model and alternative competitors. For the \textsc{phylnet} model, the {\em consensus tree} is based on a 0.8 threshold proportion.}
    \label{figure:simu_2}
\end{figure}

Figures~\ref{figure:simu_1}--\ref{figure:simu_2} display the {\em consensus trees} (see Section~\ref{sec_phy_32}) obtained under the   \textsc{phylnet} model, for both the first and second simulation scenarios, respectively. To further clarify~the~empirical gains~of  \textsc{phylnet}, these architectures are also compared with alternative tree-based reconstructions~of~the~multiscale patterns underlying the simulated networks.  As mentioned in Section~\ref{sec_1}, current~literature comprises some generalizations of stochastic block models to learn tree-structured node hierarchies \citep[e.g.,][]{clauset2008hierarchical}, but lacks solutions that can infer heterogeneous tree-based generative mechanisms in latent space models. For this reason, besides the model~by \citet{clauset2008hierarchical} (\textsc{clauset}), we consider as additional competitor the dendrogram resulting from hierarchical clustering applied to  the simulated adjacency matrices (\textsc{hclust-adj}). This tree-based topology has potential to accommodate heterogeneity at the node level and learn flexible hierarchies.  As such, it provides a natural heuristic benchmark aligned with the original motivations behind the proposed \textsc{phylnet} model. Notice that the model by \citet{clauset2008hierarchical}  and hierarchical clustering are originally designed for a single adjacency matrix. Hence, both  \textsc{clauset} and \textsc{hclust-adj} are applied to the weighted  network~that~arises from the sum of the $M$ binary adjacency matrices. The model by \citet{clauset2008hierarchical} characterizes the edge weights under a binomial likelihood, whereas hierarchical clustering leverages the resulting sum as a similarity matrix. An alternative to this aggregation would be to apply \textsc{clauset} and \textsc{hclust-adj}  separately to each adjacency matrix and then summarize the $M$ inferred node hierarchies through a shared  {\em consensus tree}. Although such a solution aligns with   \textsc{phylnet} in allowing increased flexibility  across networks, the resulting trees displayed lower accuracy relative to those inferred under~the~aggregated~perspective.

The visual comparison in Figures~\ref{figure:simu_1}--\ref{figure:simu_2} among the tree inferred by the \textsc{phylnet} model and those obtained under the aforementioned competitors highlights the noticeable empirical gains achieved by \textsc{phylnet}  in learning complex tree-based structures behind different networks. Although the networks in the first scenario are not simulated under the \textsc{phylnet} data-generative mechanism, it is sensible to expect that a suitable tree-based representation for the corresponding block structures would combine a root multifurcation into the five different communities with additional nested multifurcations~of the nodes within each of these communities, compressed at the terminal~leaves. Such a compression would correctly indicate that nodes within each community are stochastically equivalent \citep[e.g.,][]{nowicki2001estimation}, and hence indistinguishable in terms of connectivity behavior.~While~all~the methods under analysis properly learn the node communities in the first scenario, as is clear from Figure~\ref{figure:simu_1}, \textsc{phylnet} aligns more closely with this expected tree-based representation of  community~structures. The empirical improvement in learning nested modular hierarchies among the nodes is evident~also~in Figure~\ref{figure:simu_2} when comparing the consensus tree inferred by \textsc{phylnet} with the true $\Upsilon_0$ underlying the simulation of the multiple networks in the second scenario. Table~\ref{tab:dist_true_tree}~quantifies~these~gains~via~the~distances between the inferred~trees~and the true $\Upsilon_0$, under the metrics discussed in Section~\ref{sec_consis} (Jaccardi--Robinson--Foulds \citep{bocker2013generalized}; Clustering Information Distance  \citep{smith2020information},  and  Phylogenetic Information Distance \citep{smith2020information}). More specifically, we compute the distance between  $\Upsilon_0$ and its point estimate under the non-Bayesian method (\textsc{hclust-adj}), while for the Bayesian solutions (\textsc{phylnet} and  \textsc{clauset})~we consider a more in-depth assessment of posterior concentration.~This is accomplished by computing the distance of $\Upsilon_0$ from~each posterior sample of $\Upsilon$, and then displaying the average of the resulting distances  along with the associated quantiles~0.05~and~0.95.

\begin{table}[t]
\centering
\small
\caption{For the different methods under analysis, distances between the inferred trees and the true $\Upsilon_0$~in~the~second scenario, under the tree metrics discussed in Section~\ref{sec_consis} (Jaccardi--Robinson--Foulds (\textsc{jrf}) \citep{bocker2013generalized}; Clustering Information Distance (\textsc{cid}) \citep{smith2020information},  and  Phylogenetic Information Distance (\textsc{pid}) \citep{smith2020information}). For the non-Bayesian method (\textsc{hclust-adj}) we report the distance between  $\Upsilon_0$ and its point estimate.~For~the~Bayesian methods (\textsc{phylat} and \textsc{clauset}), we compute these distances for each posterior sample of $\Upsilon$ and display the resulting empirical average  along with the quantiles  0.05 and 0.95 in brackets.} 
\scalebox{1}{
\begin{tabular}{l|r|r|r}
Distance &  \textsc{phylnet} & \textsc{clauset} &  \textsc{hclust-adj}\\ 
  \hline
\textsc{jrf} & \textbf{0.18 [0.13,0.21]} & 0.44 [0.36,0.50] & 0.34 \\ 
  \textsc{cid} & \textbf{0.22 [0.16,0.26]} & 0.58 [0.45,0.63] &  0.43 \\ 
  \textsc{pid} & \textbf{0.22 [0.17,0.27]} & 0.61 [0.49,0.67] &  0.47 \\ 
   \hline
\end{tabular}}
\label{tab:dist_true_tree}
\end{table}

The results in Table~\ref{tab:dist_true_tree} confirm the superior performance of \textsc{phylnet} in achieving substantially-improved posterior concentration around the true  $\Upsilon_0$. This seems not the case for the model~by~\citet{clauset2008hierarchical} (\textsc{clauset}), whose reduced performance is arguably attributable~to~the~fact~that \textsc{clauset} was originally developed as a tree-based extension of stochastic block models,~and~hence fails to characterize more heterogeneous node-specific~patterns. This is further evident in the slight performance gains displayed by \textsc{hclust-adj} which, unlike~for  \textsc{clauset}, has potential to accommodate more flexible tree-based structures accounting for heterogeneity at the node level. However, this heuristic method is still significantly less accurate than \textsc{phylnet}, achieving about half of the accuracy of the model we propose, across all considered metrics. Such a result~motivates two comments. First, this gain is reminiscent of a well-known effect in phylogenetics, where~evolution-based models tend to improve upon distance-based methods such as \textsc{upgma} or neighbor-joining \citep{felsenstein1981evolutionary, hillis1994hobgoblin, kuhner1994}. Second,  this heuristic does not benefit from a fully-Bayesian specification leveraging latent space~representations. 

We find that \textsc{phylnet} also reconstructs correctly the scalar parameters of the model. All the $90\%$ credible intervals cover the true value ($a_0=2.6$, CI $[2.51,2.64]$; $\sigma_0^2=0.6$; CI $[0.49,0.62]$; $b_0=0.6$, CI $[0.43,0.70]$). This result, combined with the structured borrowing of information among~node~features achieved by the flexible tree, also allows to obtain accurate estimates~of the edge probabilities underlying the $M$ adjacency matrices. In particular, the root mean squared~error~between  the true \smash{$\theta^{(m)}_{vu}$} and the corresponding posterior expectation obtained via  \textsc{phylnet}~under \eqref{eq:lik} (for each $1 \leq u < v \leq V$ and $m=1, \ldots, M$) is $0.11$. This improves the error of $0.13$ resulting from separate Bayesian latent space models \citep{hoff2002latent} applied separately to the $M$ adjacency matrices under the default settings in  \texttt{latentnet}. Such a result demonstrates that structured representations of feature formation mechanisms not only expand inferential capabilities, but also improve recovery of the features themselves, and, consequently, of edge probabilities.

Having established the robustness and reliability of \textsc{phylnet} on synthetic data, we now turn~to~two relevant applications from criminology and neuroscience, and illustrate further  the ability of  \textsc{phylnet}  to learn tree-based node architectures underlying real-world network data.

\section{Applications}\label{sec_phy_5}

\subsection{Criminal networks}\label{sec_phy_51}

There is a substantial interest in criminology on unveiling the  organizational structures~of~criminal networks from the analysis of the complex connectivity patterns among the corresponding members \citep[see, e.g.,][]{calderoni2017communities,coutinho2020multilevel,campana2022studying}. Current attempts to address this  goal are  constrained by the lack of network models capable of learning~the complex multiscale patterns underlying structured criminal organizations,~and~by~the~challenges that arise from combining the multiple noisy data sources available (including separate investigations from different law-enforcement agencies) \citep[e.g.,][]{bright2022reprint,diviak2022key}.

\begin{figure}[b]
      \vspace{-5pt}
\centering
    \includegraphics[trim=0cm 0cm 0cm 0cm,clip,width=0.84\textwidth]{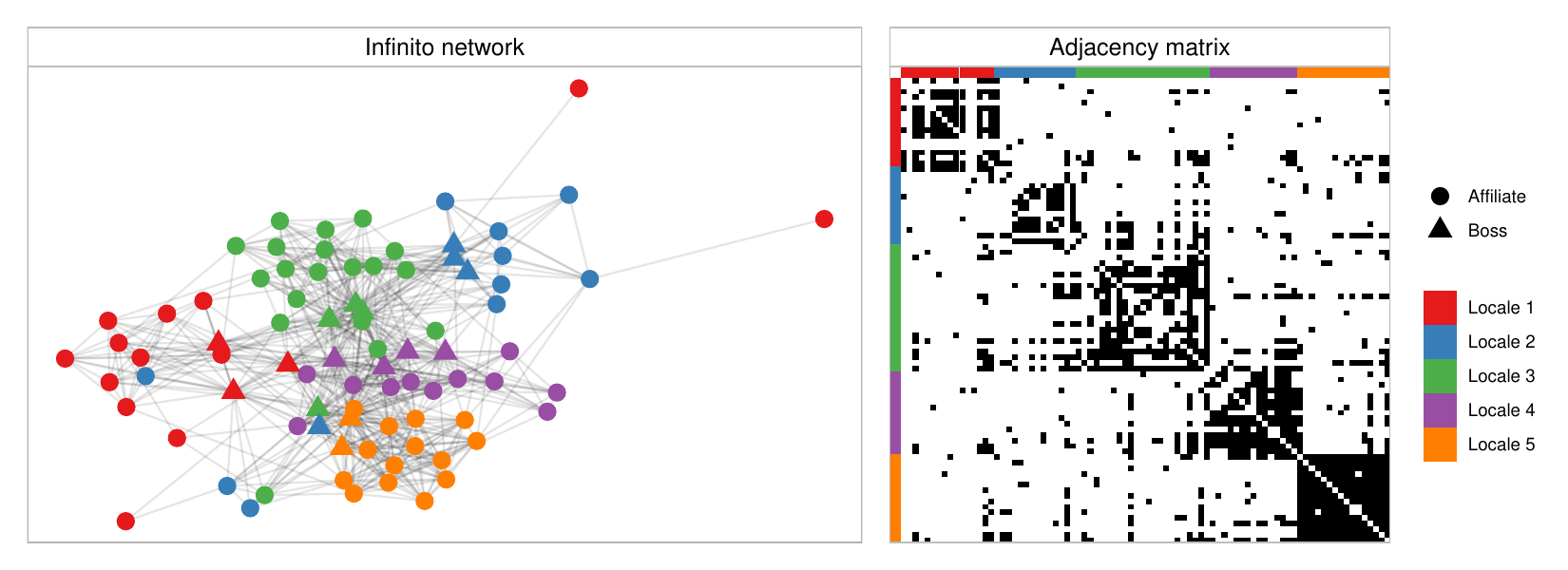}
      \vspace{-5pt}
    \caption{Example of one network in the criminology application. Left: graphical representation of the network,~where the positions of the different criminals (nodes)  are obtained via force directed placement \citep{fruchterman1991graph}, whereas colors and shape denote {\em locali} membership and role, respectively. In 'Ndrangheta, {\em locali} correspond to subgroups in the criminal organization that administer crime in specific territories. Right: Adjacency~matrix~representation of the network.  White and black colors for the entries of the matrix denote non-edges and edges respectively.}
    \label{figure:crime_1}
    \vspace{-10pt}
\end{figure}

Motivated by this gap, we consider the \textsc{phylnet} model within a hypothetical, yet realistic,~scenario based on data from a large law enforcement operation  conducted in Italy from $2007$~to~$2009$~for~monitoring and then disrupting a highly-structured  'Ndrangheta mafia organization~in~the~area of Milan. To this end, we consider the dataset studied recently~by~\citet{legramanti2022extended} and \citet{lu2024zero} which comprises information on the participation~of~$V=84$ criminals to $47$ monitored summits of the criminal organization, as reported in the judicial documents. \citet{legramanti2022extended} and \citet{lu2024zero} map this information into a single network with edges denoting either dichotomized  \citep{legramanti2022extended} or weighted  \citep{lu2024zero} co-attendances among each pair of criminals to the summits. Consistent with the previous considerations,~we~consider instead the hypothetical scenario in which each of $M=10$ law-enforcement agencies has monitored a different subset of $35$ summits~sampled~at random and  with replacement from the total of $47$. This yields $M=10$ adjacency~matrices~with entries \smash{$y^{(m)}_{vu}=y^{(m)}_{vu}=1$} if criminals $v$ and $u$ co-attended at least one of the summits monitored by agency $m$, and \smash{$y^{(m)}_{vu}=y^{(m)}_{vu}=0$} otherwise, for  $1\leq u < v \leq V$ and $m=1, \ldots, M$. Since criminal network data are often prone to data quality issues  \citep[e.g.,][]{diviak2022key}, we also add a level of contamination to each adjacency matrix by~flipping~$2\%$ of its edges.  Figure \ref{figure:crime_1} illustrates graphically one of these adjacency matrices, which we jointly model under the  \textsc{phylnet} construction. Although the~scenario~explored~is hypothetical, this choice is useful in clarifying how separate investigations subject to data incompleteness  and contamination, could be eventually pooled under  \textsc{phylnet} to obtain an informative tree-based reconstruction of the organizational structure of the monitored Mafia group.

\begin{figure}[t]
\centering
    \includegraphics[trim=0cm 0cm 0cm 0cm,clip,width=0.96\textwidth]{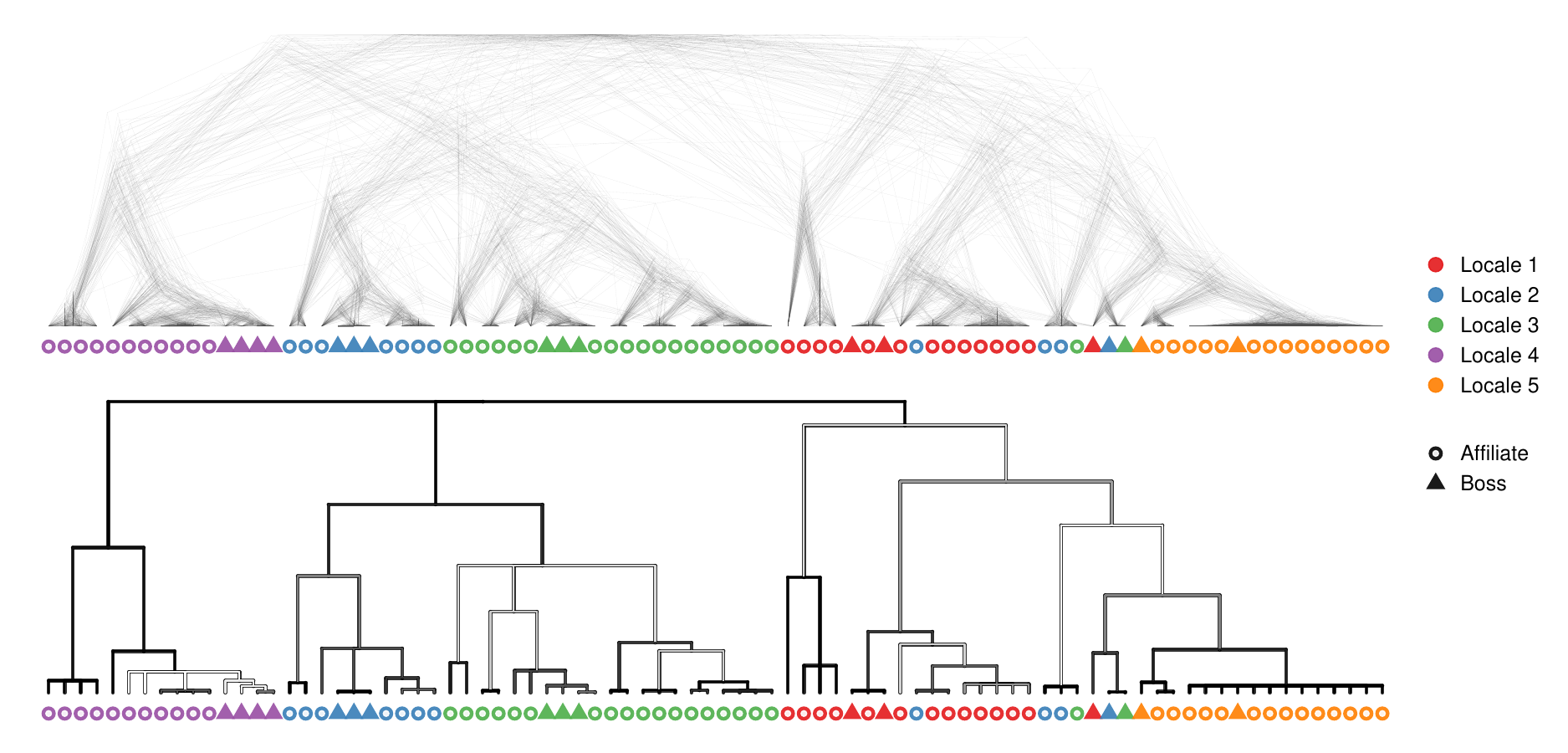}
    \caption{{\em DensiTree} and {\em consensus tree} summarizing the posterior of the tree in the criminal~networks~application. In the  {\em consensus tree}, branch colors represent the posterior support of the split rooted at the parent node, ranging from the threshold level 0.6 (white) to 1.0 (black); when no split has posterior support greater than 0.6, the tree is multifurcating.}
    \label{figure:crime_2}
\end{figure}

Figure \ref{figure:crime_2} illustrates the output of this analysis via the {\em DensiTree} and {\em consensus tree} learned~by the    \textsc{phylnet} model with the same hyperparameters and \textsc{mcmc} settings as in the simulation~studies in Section~\ref{sec_phy_4}. Interestingly, these trees reveal unexplored hierarchies of increasingly-nested macro, meso and micro modules attributable to {\em locali} differentiations, role specialization within {\em locali}, and blood-family ties, respectively. Besides providing novel quantitative support to criminology theories on the organizational structure of  'Ndrangheta \citep[e.g.,][]{paoli2007mafia,catino2014mafias,calderoni2017communities}, this  tree structure also unveils specific criminals with nuanced positions that partially depart from those of the more rigid 'Ndrangheta architecture. This is the case, for example,~of~the two affiliates in the clades of bosses from the blue~and~green {\em locali}. According to the judicial~acts, these affiliates are high-rank members with a key role in overseeing~implementation of bosses' decisions. Consistent~with~the~learned trees, this role is closer to the one covered by a boss than~an affiliate. It is also interesting to note three bosses of different {\em locali} form a subtree with~the~clade corresponding to the orange {\em locale}. Although this positioning might appear unusual, in the judicial acts these bosses are listed among those supporting a failed attempt to increase the independence of the 'Ndrangheta group in the area of Milan from the leading families in Calabria. The inferred tree suggests that this event resulted in a need for these~bosses to move away from the corresponding {\em locale} and move their area of influence towards the orange one. This movement is also followed by the close affiliates. These results clarify the ability of   \textsc{phylnet} to learn not only higher-level modular hierarchies within the network, but also more nuanced lower-level heterogeneous behaviours of  specific nodes, thereby achieving an effective balance between the rigid high-level organizational structure of 'Ndrangheta and the more fluid local~positioning~of~its~members.

\subsection{Brain networks}\label{sec_phy_52}
State-of-the-art studies in neuroscience highlight the presence of symmetries and multiscale modular patterns in structural brain connectivity networks \citep{bullmore2009complex,meunier2010modular,rubinov2010complex,bullmore2012economy,betzel2017multi,esfahlani2021modularity}. These findings are supported by  anatomical considerations~and empirical evidences from the analysis of replicated structural brain network data via community detection algorithms or stochastic block models \citep[e.g.,][]{meunier2010modular}. However,~although useful for identifying brain modules, both community detection  and stochastic block models are not originally designed for inferring increasingly-nested hierarchies and symmetries among  modules from the observed multiscale structures in human brain networks. As a result, current quantitative analyses might provide an overly-coarsened reconstruction of brain organization that fails to characterize complex modules at different scales along with heterogeneous patterns at the~level~of~brain~regions.

\begin{figure}[b]
\centering
    \includegraphics[trim=0cm 0cm 0cm 0cm,clip,width=0.84\textwidth]{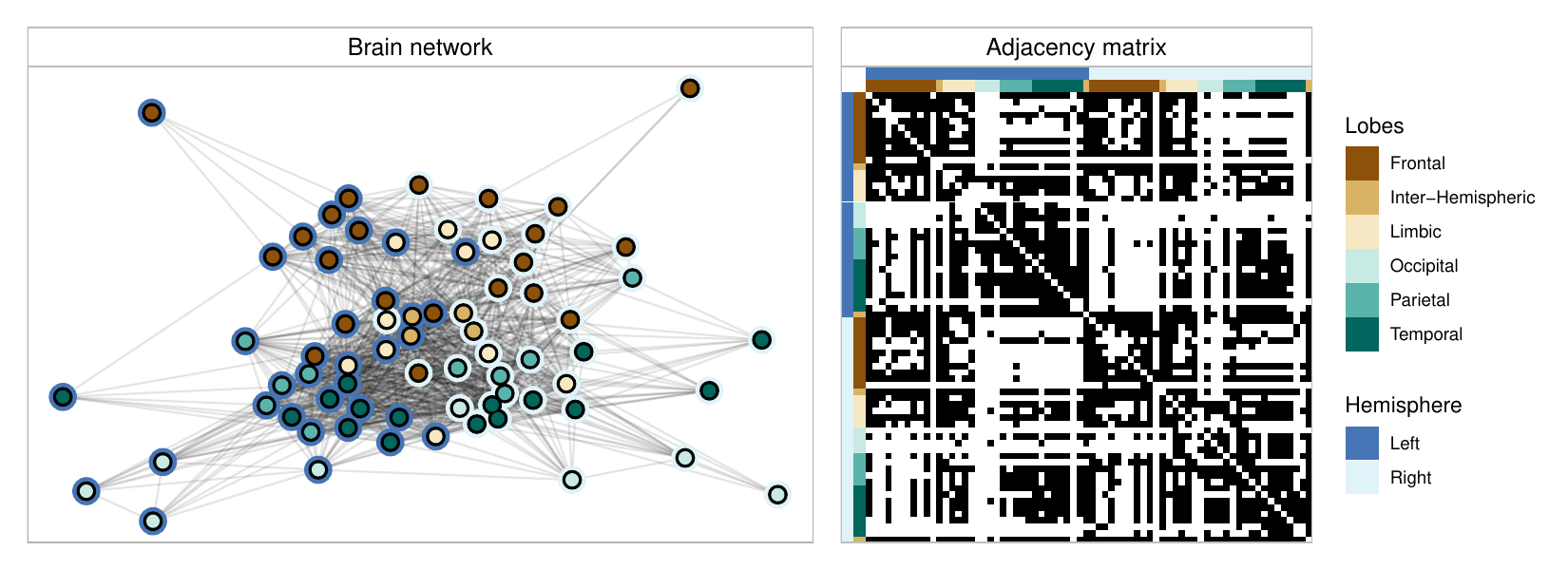}
\vspace{-5pt}
    \caption{Examples of one network in the neuroscience application. Left: graphical representation of  a network,  where the positions of the different brain regions (nodes) are obtained via force directed placement \citep{fruchterman1991graph}, whereas colors denote lobe and hemisphere membership, respectively. Right: Adjacency matrix representation of the network.  White and black colors for the entries of the matrix denote non-edges and~edges respectively.}
    \label{figure:brain_1}
    \vspace{-5pt}
\end{figure}

To overcome these limitations, we apply the  \textsc{phylnet} model to structural brain connectivity~networks from the Enhanced Nathan Kline Institute Rockland Sample project.~The~pre-processed~data are available at \url{https://neurodata.io/mri/} and comprise white matter fibers connectivity measures among $V=68$ anatomical regions from the Desikan atlas \citep[e.g.,][]{desikan2006automated} for $M=20$ individuals whose brain has been scanned twice via diffusion tensor imaging (\textsc{dti}). We represent these data as $M=20$ binary adjacency matrices with generic entry \smash{$y^{(m)}_{vu}=y^{(m)}_{vu}=1$} if at least one white matter fiber is recorded among brain regions $v$ and $u$ in the two \textsc{dti} scans of individual $m$, and \smash{$y^{(m)}_{vu}=y^{(m)}_{vu}=0$} otherwise, for  $1\leq u < v \leq V$, $ m=1, \ldots, M$; see Figure~\ref{figure:brain_1} for a representative example of the resulting adjacency matrices.

\begin{figure}[t]
\centering
    \includegraphics[trim=0cm 0cm 0cm 0cm,clip,width=0.96\textwidth]{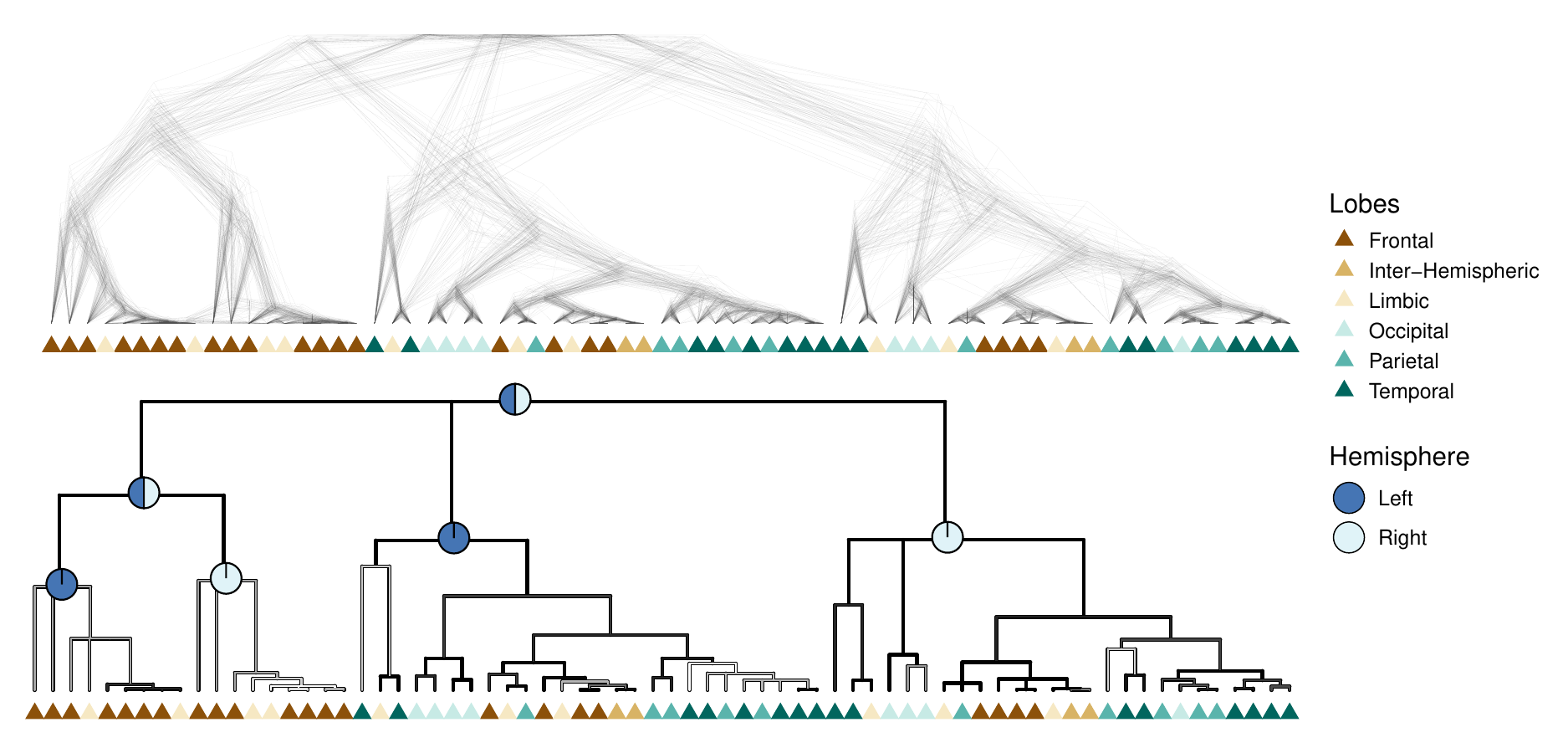}
    \caption{{\em DensiTree} and {\em consensus tree} summarizing the posterior of the tree in the brain networks application. In the {\em consensus tree}, branch colors represent the posterior support of the split rooted at the parent node, ranging from the threshold level 0.6 (white) to 1.0 (black); when no split has posterior support greater than 0.6,~the~tree is multifurcating.}
    \label{figure:brain_2}
\end{figure}

Figure~\ref{figure:brain_2} displays the {\em DensiTree}  and  {\em consensus tree}  inferred by the  \textsc{phylnet} model under~the same hyperparameters' specification and \textsc{mcmc} settings as in the simulation studies in Section~\ref{sec_phy_4}. These trees unveil a previously-unexplored representation of the nested symmetries and hierarchies in brain organization that points toward a macro-level frontal-backward partition followed by an intermediate nested division according to the two hemispheres and a local-level organization which is generally coherent with lobes classification and crucially respect bilateral symmetries for those pairs~of~nodes~characterizing the same region in the two different hemispheres. Interestingly, these symmetries are preserved also for those regions whose inferred position in the tree departs from lobe classification.~For example, the two left and right regions from the limbic lobe that are allocated to the frontal clade (i.e., {\em  rostralanteriorcingulate} and {\em  caudalanteriorcingulate}) are anatomically closer to the anterior part of the human brain than other regions from the frontal lobe, such as the {\em  paracentral}, {\em precentral}, {\em caudalmiddlefrontal} and part of the {\em superiorfrontal} which, in fact, are assigned to the backward clade under~the  \textsc{phylnet} model. Within the backward clade it is also interesting to notice three brain regions (i.e.,~{\em temporalpole}, {\em  entorhinal}, {\em parahippocampal}) that form a peculiar branch present in both~the~left~and~right~hemisphere division. Besides being spatially closer, these regions are conjectured~to~be~closely-related both anatomically and functionally \citep{blaizot2010human}. The inferred tree in Figure~\ref{figure:brain_2} seems to provide empirical evidence in favor of this conjecture, while also aligning with recent analyses of brain modules that point toward the macro-level  frontal-backward, rather than left-right, division of brain networks \citep{esfahlani2021modularity}. In addition to these analyses, the tree inferred within Figure~\ref{figure:brain_2}~clarifies that the modular hierarchies and symmetries in human brains~are~more~complex~and~nuanced~than current classifications of brain regions into lobes, thus motivating future research on the definition~of anatomical brain parcellations and taxonomies. These results are further strengthened by the {\em DensiTree} in Figure~\ref{figure:brain_2} that showcases a relatively low variability of the posterior for the phylogenetic tree, meaning that the inferred structures are generally stable and~coherent across the studied individuals.

\vspace{-5pt}
\section{Conclusions}\label{sec_phy_6}
Although there is evidence from several fields that real networks display multiscale structures and hierarchical organization \citep[][]{ravasz2003hierarchical,gosztolai2021unfolding},~state-of-the-art latent space models lack extensions capable of incorporating and learning these recurring architectures. The  \textsc{phylnet}  model  addresses this gap via a structured representation~of~the~generative process underlying node latent features that creates a unique bridge with Bayesian phylogenetics.  The empirical results in Sections~\ref{sec_phy_4} and \ref{sec_phy_5} show that this perspective can reconstruct~informative and yet-unexplored tree-based representations of complex network organization in several settings displaying substantially different multiscale connectivity.

The above advancement motivates several directions for future research. An important one is to extend the  model from binary undirected settings to directed and weighted network contexts. Both extensions require straightforward modifications of our construction. In the former case it suffices to consider two vectors of latent features for each node regulating~its~sender and receiver behavior, and then learn two separate trees characterizing the formation process of these sender and receiver features, respectively (a related perspective can be employed also to address bipartite network settings, which include recommender systems). Inclusion of weighted edges~simply~requires to replace the Bernoulli likelihood with, e.g., a Poisson one and consider a latent space construction for the  log-rates, rather than for the logit~of~the~edge~probabilities. 

As discussed in Sections~\ref{sec_phy_22}--\ref{sec_phy_23}, the \textsc{phylnet} model benefits from the availability of multiple~network observations to achieve effective learning of tree-structured network organization.~In~fact,~the~diffuse prior on the tree combined with the complexity of such an object have the effect of limiting~posterior concentration when a single network is observed. One possibility to address~this challenge~is~to supervise the prior on the tree via external node attributes that are often available in practice~and~generally inform on network connectivity. This perspective is equivalent to joint modeling~of~the~node~latent features and the node attributes as processes~over~the~same~tree~$\Upsilon$,~thereby~augmenting~the amount of signal from the data that is necessary to infer the tree.~We~partially explored this direction obtaining promising preliminary results, which encourage further~investigations. In the multiple network setting, another interesting direction would be to explore alternative specifications to the one in Section~\ref{sec_phy_23}. A sensible one consists in replacing~\eqref{eq:bbm_row_multi} with \smash{$ (\bZ^{(m)\intercal}_{[k]}\mid \mu^{(m)}_k, \sigma^2,\Upsilon) \overset{\text{ind}}{\sim}\mbox{N}_V(\bmu_k=[\mu_{k1}, \ldots, \mu_{kV}]^{\intercal},\sigma^2{\bf I}_V)$}, and then assume~that~each~$\bmu_k$, $k=1, \ldots, K$, is from a branching Brownian motion parameterized~by~the tree~$\Upsilon$. By employing shared vectors $\bmu_k$, $k=1, \ldots, K$, this formulation partially reduces flexibility, but introduces additional dependence across the $M$ networks, which might be useful in reducing~posterior~uncertainty on $\Upsilon$. Notice that it is also possible to replace the assumption of equal prior variance~$\sigma^2$ for the latent features with node-specific ones $\sigma_1^2, \ldots, \sigma^2_V$, thereby including further heterogeneity. This extension coincides with removing the ultrametric property for $\Upsilon$ to gain additional~flexibility. While interesting, it is unclear how enlarging the set of tree topologies would affect efficiency of inference on $\Upsilon$. In fact, classical latent space representations \citep{hoff2002latent} rely on a common prior variance for the features, rather than node-specific ones.

Finally, from a more practical perspective, it is of interest to consider further research~on~the~specification of the latent space itself. To this end, particularly relevant is the choice~of~the~dimension $K$, that we address in the article by inheriting common-practice and goodness-of-fit~analyses in classical latent space models \citep{hoff2002latent,handcock2007model,latentnet}.~As~such, any advancement along these lines \citep[e.g.,][]{kaur2023latent}~can~be~readily~employed under the   \textsc{phylnet} model. In addition, inspired by recent research on alternative latent space geometries  \citep[e.g.,][]{smith2019geometry}, we believe it would be of interest to study the possibility of specifying the  \textsc{phylnet}~model under non-Euclidean geometries. In this respect, a relevant direction would be to explore the hyperbolic latent space  \citep{krioukov2010hyperbolic,lubold2023identifying}, as this choice seems to naturally accommodate tree-like structures~in~the~represented~networks.

\vspace{5pt}
\subsubsection*{Acknowledgments}
Federico Pavone is funded by the European Union’s Horizon 2020 research and innovation programme under the Marie Skłodowska-Curie grant agreement No 101034255. Daniele Durante is funded~by~the European Union (ERC, NEMESIS, project number: 101116718). Robin J. Ryder is funded by the European Union under the GA 101071601, through the 2023--2029 ERC Synergy grant OCEAN. Views and opinions expressed are however those of the author(s) only and do not necessarily reflect those~of the European Union or the European Research Council Executive Agency. Neither the European~Union nor the granting authority can be held responsible for them.

\vspace{15pt}
\begin{center}
\Large {\bf Supplementary Material}
\end{center}
\spacingset{1.7} 
\vspace{-35pt}

\setcounter{section}{0}
\setcounter{equation}{0}
\setcounter{table}{0}
\setcounter{figure}{0}

\renewcommand{\theequation}{S.\arabic{equation}}
\renewcommand{\thesection}{S\arabic{section}}  
\renewcommand{\thefigure}{S.\arabic{figure}}
\renewcommand{\thetable}{S.\arabic{table}}

\section{Proofs}\label{sec_1s}
Below we provide the proofs of the theoretical results in the main article. To this end, let us first state and prove two useful lemmas.

\begin{lems}\label{lemma:d}
    Consider prior \eqref{eq:bbm_row_multi} for the features $\bZ^{(m) \intercal}_{[k]}$, $k=1, \ldots, K$ and $m=1, \ldots, M$. Then, conditionally on $\sigma^2$ and $\Upsilon$, it holds
    \begin{eqnarray*}
     \frac{(d^{(m)}_{vu})^2}{2\sigma^2(1-[\bSigma_\Upsilon]_{vu})}=\frac{([\bD]_{vu}^{(m)})^2 }{2\sigma^2(1-[\bSigma_\Upsilon]_{vu})}=\frac{\lVert \bz^{(m)}_{u} - \bz^{(m)}_{u} \rVert^2}{2\sigma^2(1-[\bSigma_\Upsilon]_{vu})}\sim \chi_K^2,
    \end{eqnarray*}
    independently over $m = 1,\dots, M$, where $(d^{(m)}_{vu})^2=([\bD]_{vu}^{(m)})^2=\lVert \bz^{(m)}_{u} - \bz^{(m)}_{u} \rVert^2$ is the squared Euclidean distance between the $K$-dimensional feature vectors of nodes $v$ and $u$ in network $m$, and $ \chi_K^2$~is~a~Chi-square with $K$ degrees of freedom.
\end{lems}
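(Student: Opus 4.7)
The plan is to reduce the claim to a direct multivariate Gaussian calculation, since under prior \eqref{eq:bbm_row_multi} the entire array of latent features is jointly Gaussian with explicit independence across both the feature index $k$ and the network index $m$. Conditioning on $(\sigma^2, \Upsilon)$ leaves the centers $\mu_k^{(m)}$ as residual random quantities, but these will cancel in the pairwise differences $z_{kv}^{(m)} - z_{ku}^{(m)}$, so the calculation is effectively a single Gaussian computation carried out $K$ times in parallel.

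First I would fix $(m, v, u)$ and, for each $k$, extract from \eqref{eq:bbm_row_multi} the bivariate marginal of $(z_{kv}^{(m)}, z_{ku}^{(m)})$. Using the ultrametric normalization $T = 1$ (so that the diagonal of $\bSigma_\Upsilon$ equals $1$ and the off-diagonal equals $[\bSigma_\Upsilon]_{vu} = t_{vu}$), this bivariate Gaussian has common mean $\mu_k^{(m)}$ and covariance $\sigma^2$ times the $2\times 2$ block with unit diagonal and off-diagonal $[\bSigma_\Upsilon]_{vu}$. Taking the difference $z_{kv}^{(m)} - z_{ku}^{(m)}$ then kills the shared center and yields a univariate centered Gaussian with variance $\sigma^2(1 + 1 - 2[\bSigma_\Upsilon]_{vu}) = 2\sigma^2(1 - [\bSigma_\Upsilon]_{vu})$, whose law depends only on $(\sigma^2, \Upsilon)$. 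In particular the conditional distribution given $(\sigma^2, \Upsilon)$ coincides with the one given $(\mu_k^{(m)}, \sigma^2, \Upsilon)$, which handles the only mildly subtle point of not having conditioned on the nuisance centers.

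Second I would invoke the across-$k$ independence of the rows $\bZ_{[k]}^{(m)}$ built into \eqref{eq:bbm_row_multi}: the $K$ coordinate-wise differences $\{z_{kv}^{(m)} - z_{ku}^{(m)}\}_{k=1}^K$ are mutually independent, and dividing each by $\sqrt{2\sigma^2(1 - [\bSigma_\Upsilon]_{vu})}$ produces i.i.d.\ standard Gaussians. Summing their squares gives exactly $\lVert \bz_v^{(m)} - \bz_u^{(m)}\rVert^2 / [2\sigma^2(1 - [\bSigma_\Upsilon]_{vu})]$, which by definition is $\chi^2_K$. The asserted independence over $m = 1, \ldots, M$ is then immediate from the across-$m$ independence of $\bZ^{(1)}, \ldots, \bZ^{(M)}$ under \eqref{eq:bbm_row_multi}, since for each $m$ the distance $d_{vu}^{(m)}$ is a measurable function of $\bZ^{(m)}$ alone. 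I do not expect any substantive obstacle here: the single bookkeeping item is confirming that $[\bSigma_\Upsilon]_{vv} = 1$, which is the stated normalization, and the proof is otherwise a textbook consequence of the bilinear structure of Gaussian covariances and the definition of the chi-square distribution.
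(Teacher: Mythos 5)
Your proposal is correct and follows essentially the same route as the paper's proof: compute the bivariate Gaussian marginal of $(z^{(m)}_{kv},z^{(m)}_{ku})$, observe that the difference is centered Gaussian with variance $2\sigma^2(1-[\bSigma_\Upsilon]_{vu})$, use independence across $k$ to sum the squared standardized differences into a $\chi^2_K$, and inherit independence over $m$ from the prior. Your explicit remark that the nuisance centers $\mu^{(m)}_k$ cancel in the differences is a small bookkeeping point the paper leaves implicit, but the argument is the same.
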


\begin{proof}[Proof of Lemma \ref{lemma:d}]
Recall that under prior  \eqref{eq:bbm_row_multi}, we have
\begin{equation}
    ( \bZ^{(m) \intercal}_{[k]}=[z^{(m)}_{k1}, \ldots, z^{(m)}_{kV}]^{\intercal}  \mid \sigma^2, \Upsilon) \sim\mbox{N}_V(\mu^{(m)}_k\bone_V,\sigma^2\bSigma_\Upsilon)
     \end{equation}
independently for $k=1,\dots,K$ and $m=1,\dots,m$.  Therefore, leveraging standard properties~of~multivariate Gaussian distributions, we have that
\begin{equation}
    (z^{(m)}_{kv} - z^{(m)}_{ku}\mid \sigma^2, \Upsilon) \sim \mbox{N}\left(0, 2\sigma^2(1-[\bSigma_\Upsilon]_{vu})\right),
\end{equation}
and, hence,
\begin{equation}\label{eq:sq_diff}
 ( (z^{(m)}_{kv} - z^{(m)}_{ku})^{2}/(2\sigma^2(1-[\bSigma_\Upsilon]_{vu})) \mid \sigma^2, \Upsilon) \sim \chi_1^2.
\end{equation}
independently for $k=1, \ldots, K$ and $m=1, \ldots, M$. Therefore, since $(d^{(m)}_{vu})^2=\sum_{k=1}^K(z^{(m)}_{kv} - z^{(m)}_{ku})^{2}$, by the properties of the Chi-square  distribution, it holds
\begin{equation}
 \frac{(d^{(m)}_{vu})^2}{2\sigma^2(1-[\bSigma_\Upsilon]_{vu})}=\frac{([\bD]_{vu}^{(m)})^2 }{2\sigma^2(1-[\bSigma_\Upsilon]_{vu})}=\frac{\lVert \bz^{(m)}_{u} - \bz^{(m)}_{u} \rVert^2}{2\sigma^2(1-[\bSigma_\Upsilon]_{vu})}\sim \chi_K^2,
\end{equation}
independently over $m=1,\dots,M$, thereby proving Lemma \ref{lemma:d}.
\end{proof}

\begin{lems}\label{lemma:split_root}
Consider a set of indices $\mathbb{V}=\{1,\dots,V\}$ representing the leaves of two binary splitting trees $\Upsilon_a$ and $\Upsilon_b$. There always exists a pair of indices $(v,u)\in \mathbb{V}\times \mathbb{V}$ such that their most recent common ancestor is the root in both trees $\Upsilon_a$ and $\Upsilon_b$.
\end{lems}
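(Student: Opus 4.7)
The plan is to reduce the claim to an elementary set-theoretic statement about the two root splits. Recall that in a rooted binary tree, the root partitions the leaf set $\mathbb{V}$ into two non-empty subsets corresponding to the leaves under the two root subtrees; moreover, the most recent common ancestor of two leaves $v,u$ is the root if and only if $v$ and $u$ lie in different blocks of this root partition. Thus, denoting the root split of $\Upsilon_a$ by $\mathbb{V}=A\sqcup B$ and that of $\Upsilon_b$ by $\mathbb{V}=A'\sqcup B'$, the lemma becomes: there exist $v,u\in\mathbb{V}$ such that $\{v,u\}$ crosses both partitions.

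First I would consider the four intersections $A\cap A'$, $A\cap B'$, $B\cap A'$, $B\cap B'$, which form a refinement of either root split. Finding the desired pair $(v,u)$ is equivalent to showing that at least one of the two ``diagonal'' pairs
\begin{equation*}
    \bigl(A\cap A',\; B\cap B'\bigr) \quad\text{or}\quad \bigl(A\cap B',\; B\cap A'\bigr)
\end{equation*}
consists of two non-empty sets, since in that case one can pick $v$ from the first block and $u$ from the second, obtaining a pair separated by both root splits.

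Then I would argue by contradiction. Suppose neither diagonal has both blocks non-empty. Then at least one member of each diagonal is empty, which yields four possible combinations. Each combination leaves one of $A$, $B$, $A'$, $B'$ completely empty: for instance, $A\cap A'=\emptyset$ and $A\cap B'=\emptyset$ force $A=A\cap(A'\cup B')=\emptyset$, contradicting that $A$ is one side of the root split of a binary tree, hence non-empty (this uses $V\geq 2$, which is implicit in the existence of a non-trivial binary tree on $\mathbb{V}$). The remaining three combinations are handled symmetrically.

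The argument is short and the only subtle point is recognising that the ``MRCA equals root'' condition is precisely equivalent to the pair crossing the root split; once this reformulation is made, the conclusion is a pigeonhole/partition argument and no tree-specific structure beyond the two root splits is needed.
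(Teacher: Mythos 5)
Your proposal is correct and follows essentially the same route as the paper's proof: both reduce the claim to showing that one of the two ``diagonal'' pairs of intersections of the two root bipartitions has both blocks non-empty, the only difference being that you argue by contradiction over four cases while the paper gives a direct two-case analysis (either $A\cap A'$ and $B\cap B'$ are both non-empty, or one is empty and the other diagonal is forced to work). The reformulation of ``MRCA equals root'' as ``the pair crosses the root split'' is exactly the reduction the paper uses, and your emptiness argument is sound.
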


\begin{proof}[Proof of Lemma \ref{lemma:split_root}]
The root bifurcation can be represented by a bipartition of $\mathbb{V}$ in two non-empty complementary sets, which we refer to as $\mathbb{V}=A\cup A^C$ for $\Upsilon_a$, and $\mathbb{V}=B\cup B^C$ for $\Upsilon_b$.~We~want to show that we can always find a pair of indices $(v,u)\in \mathbb{V} \times \mathbb{V}$, such that $(v,u)\in (A\times A^C)\cup (A^C \times A)$ and $(v,u)\in (B\times B^C)\cup (B^C \times B)$. To this end, notice that there are only two possible scenarios representing the relations between the two partitions $(A,A^C)$ and $(B,B^C)$. In one case, the following~holds
\begin{equation}\label{eq:set_cond1}
    A\cap B\neq \emptyset \qquad \text{and} \qquad A^C \cap B^C\neq \emptyset,
\end{equation}
and we can choose $v\in A\cap B$ and $u\in A^C\cap B^C$, which satisfy the required condition.  If  \eqref{eq:set_cond1} does not hold, we can assume without loss of generality that $A^C \cap B^C = \emptyset$ and hence $B^C\subset A$. In this case, it must hold that 
\begin{equation}
A\cap B^C \neq \emptyset \qquad \text{and} \qquad A^C\cap B \neq \emptyset,
\end{equation}
and we can choose $v \in A\cap B^C$ and $u \in A^C\cap B$, which satisfy the required condition.
\end{proof}

Leveraging the above results, let us now prove Lemmas~\ref{prop:constant}--\ref{thm:ident_lat}, along with Theorems~\ref{thm:s2_g}--\ref{thm:consist} and Proposition~\ref{prop:matrix_normal} in the main article.

\vspace{15pt}
\begin{proof}[Proof of Lemma~\ref{prop:constant}]
The proof is constructed considering two cases: $c > 0$ and $ -\delta \leq c < 0$.
\begin{enumerate}
\item \textbf{Case 1. $c>0$}. Consider the centering matrix $\bC=\bI_V - \frac{1}{V}\bJ_V$, where $\bJ_V$ is the $V\times V$ matrix of all ones. Notice that $\bC \bC=\bC$.
The Gram matrix associated to the matrix of squared distances $\bD^2$ can be written as
\begin{equation}
    \bG=-\frac{1}{2}\bC\bD^2\bC.
\end{equation}
Observe that $\widetilde{\bD}^2 = \bD^2 + c^2(\bJ_V-\bI_V) +2c\bD$, where $\bD$ is the element--wise square root of $\bD^2$. Since, $\bC(\bJ_V - \bI_V)\bC = -\bC$, the Gram matrix associated to $\widetilde{\bD}^2$ can be written as
\begin{equation}\label{eq:G_tilde}
\begin{aligned}
    \widetilde{\bG} &=-\frac{1}{2}\bC\bD^2\bC -c\bC\bD\bC + \frac{c^2}{2}\bC = \bG + 2c\bG_{1/2} + \frac{c^2}{2}\bC,
\end{aligned}
\end{equation}
where $\bG_{1/2}$ is the Gram matrix associated to the squared root distance $\bD$. Since $\bD^2$ is Euclidean, then $\bD$ is Euclidean \citep{schoenberg1937certain,maehara2013euclidean} and therefore $\bG_{1/2}$ is positive semidefinite. It follows that $\widetilde{\bG}$ is positive semidefinite as all three terms in the right--hand--side of \eqref{eq:G_tilde} are positive semidefinite. Hence, $\widetilde{\bD}^2$ is a Euclidean distance matrix.

Now, recall that the Schoenberg criterion \citep{schoenberg1935remarks,maehara2013euclidean} states that $\bD^2$~is~representable in $\R^K$ if and only if $\text{rank}(\bG)\leq K$. We now show that if $V$ is large enough, then $\text{rank}(\widetilde{\bG})>K$ and therefore $\widetilde{\bD}^2$ does not admit a representation in $\R^K$.  First of all, notice that by construction the vector $\bone_V$ of all ones is an eigenvector with eigenvalue 0 for both $\widetilde{\bG}$ and $\bG$. Since $\bD^2$ is representable in $\R^K$,~then $\text{rank}(\bG)=r_\bG\leq K$. Thus, there exist $\bv_1,\dots,\bv_{V-r_\bG-1}\ind \bone_V$ vectors such that $\bG\bv_i=\bzero$.
Moreover, $\bC\bv_i = \bv_i$ because of orthogonality to $\bone_V$.
Consider now,
\begin{equation}
\begin{aligned}
    \widetilde{\bG}\bv_i &= \bG\bv_i + 2c\bG_{1/2}\bv_i + \frac{c^2}{2}\bC\bv_i = 2c\bG_{1/2}\bv_i + \frac{c^2}{2}\bv_i.
\end{aligned}
\end{equation}
We show that for all $i=1,\dots,V-r_\bG-1$, the vector $\bv_i$ does not belong to the kernel of $\widetilde{\bG}$. Indeed,
\begin{equation}
    \begin{aligned}
        \bv_i\in\text{Ker}(\widetilde{\bG}) \, &\iff \, 2c\bG_{1/2}\bv_i = - \frac{c^2}{2}\bv_i 
        \iff \, \bG_{1/2}\bv_i = -\frac{c}{4}\bv_i.
    \end{aligned}
\end{equation}
But since $\bG_{1/2}$ is positive semidefinite, it cannot have a negative eigenvalue equal to $-c/4$. Therefore, there are at least $V-r_\bG-1$ orthogonal vectors in $\text{Im}(\widetilde{\bG})$.
Hence, $\text{rank}(\widetilde{\bG})\geq V-r_\bG-1$. Since $r_\bG$ is upper bounded by $K$, for any $V>2K+1$ it holds that $\text{rank}(\widetilde{\bG})>K$.

\item \textbf{Case 2. $-\delta \leq c < 0$}. This case is proven by contradiction. To this end, let us assume~that~$\widetilde{\bD}^2$ admits a representation in $\R^K$. Then, under~\eqref{D_id}, $\bD$ can be written as $[\bD]_{vu}=[\widetilde{\bD}]_{vu} + (-c)$~with $ -c > 0$. Leveraging this result and  the arguments of Case 1, it follows that, if~{$V>2K{+}1$}, $\bD^2$ does not admit a representation in $\R^K$. This is a contradiction, because $\bD^2$ admits a representation in $\R^K$ by construction. Hence, $\widetilde{\bD}^2$ does not admit a representation in $\R^K$ if $V > 2K + 1$.
\end{enumerate}
\vspace{-20pt}
\end{proof}

\vspace{-10pt}
\begin{proof}[Proof of Lemma~\ref{thm:ident_lat}]
Under~\eqref{eq:lik_single}, the joint distribution of the edges in $\bY$ is identified by the Bernoulli probabilities $\theta_{vu}$, $v=2, \ldots, V$, $u=1, \ldots, v-1$, which in turn depend on $a$ and $\bD$, under the 1-to-1 logit mapping, via the predictor $a-\lVert \bz_{u} - \bz_{u} \rVert=a-[\bD]_{vu}$.
 Therefore, the assumption 
\begin{equation}
    \Prob_{a, \bD} \overset{\text{d}}{=} \Prob_{\widetilde{a},\widetilde{\bD}}
\end{equation}
implies that $a-([\bD]_{vu})=\widetilde{a}-[\widetilde{\bD}]_{vu}$ for every $v=2, \ldots, V$, $u=1, \ldots, v-1$. 
Without loss of generality, we can assume $ \widetilde{a} = a + c$, for some constant $c\neq0$. 
Hence, in order for the equality $a-[\bD]_{vu}=\widetilde{a}-[\widetilde{\bD}]_{vu}$ to hold, we need that $a-[\bD]_{vu}= a+c-[\widetilde{\bD}]_{vu}$ which implies $[\widetilde{\bD}]_{vu}=[\bD]_{vu}+c$. 
If $c < -\delta$, with $\delta = \min \{[\bD]_{vu}: v=2,\dots,V, \, u=1,\dots,v-1\}$, then $\widetilde{\bD}$ would have at least one negative entry and it would not be a pairwise distance matrix. Otherwise, if $c\geq -\delta$, by Lemma~\ref{prop:constant}, this is not possible as $\widetilde{\bD}^2$ would not be representable in $\R^K$.
\end{proof}

\vspace{10pt}
\begin{proof}[Proof of Theorem \ref{thm:s2_g}]
We prove the result by showing that if $(\sigma^2,\Upsilon)\neq (\widetilde{\sigma}^2,\widetilde{\Upsilon})$, then \begin{equation}\label{eq:ineq}
\Prob_{\sigma^2,\Upsilon} \overset{\text{d}}{\neq} \Prob_{\widetilde{\sigma}^2,\widetilde{\Upsilon}}.
\end{equation}
Leveraging Lemma \ref{lemma:d} and defining $\omega_{vu}(\sigma^2,\Upsilon):=2\sigma^2(1-[\bSigma_\Upsilon]_{vu})$, we can write the probability of a success in the marginal model as
\begin{equation}
   \mbox{pr}(y_{vu}^{(m)} = 1\mid \sigma^2,\Upsilon) = \E_{a}[ \E_{(\bar{d}^{(m)}_{vu})^2}[ \text{expit}\{a - \sqrt{(\bar{d}^{(m)}_{vu})^2 \omega_{vu}(\sigma^2,\Upsilon)}\}\mid a ]],
\end{equation}
where $\text{expit}(x)=\exp(x)/[1+\exp(x)]$, while the expectation is computed with respect to $(\bar{d}^{(m)}_{vu})^2 \sim\chi^2_K$ and $a\sim \mbox{N}(0, \sigma_a^2)$, i.e., the prior distributions. In order to prove \eqref{eq:ineq}, we show that for at least one pair $(v,u)$ the probability of an edge between $v$ and $u$ is different under the two parametrizations.~In particular, when $(\sigma^2,\Upsilon)\neq(\widetilde{\sigma}^2,\widetilde{\Upsilon})$ we can always choose $(v,u)$ such that $\omega_{vu}(\sigma^2,\Upsilon)\neq \omega_{vu}(\widetilde{\sigma}^2,\widetilde{\Upsilon})$ as follows:
\begin{itemize}
   \item If $\sigma^2\neq \widetilde{\sigma}^2$, then from Lemma \ref{lemma:split_root} there exists $(v,u)$ such that the two nodes split at the root in both $\Upsilon$ and $\widetilde{\Upsilon}$. Therefore $[\bSigma_\Upsilon]_{vu}=[\bSigma_{\widetilde{\Upsilon}}]_{vu}=0$, and, hence
   \begin{equation}
       \omega_{vu}(\sigma^2,\Upsilon) = 2\sigma^2 \neq 2\widetilde{\sigma}^2 = \omega_{vu}(\widetilde{\sigma}^2,\widetilde{\Upsilon});
   \end{equation}
   \item If $\sigma^2 = \widetilde{\sigma}^2$ and $\Upsilon\neq \widetilde{\Upsilon}$, then there exists $(v,u)$ such that $[\bSigma_\Upsilon]_{vu} \neq [\bSigma_{\widetilde{\Upsilon}}]_{vu}$. Therefore, 
   \begin{equation}
       \omega_{vu}(\sigma^2,\Upsilon)=2\sigma^2(1-[\bSigma_\Upsilon]_{vu}) \neq 2\sigma^2(1-[\bSigma_{\widetilde{\Upsilon}}]_{vu}) = \omega_{vu}(\widetilde{\sigma}^2,\widetilde{\Upsilon}).
   \end{equation}
\end{itemize}
\vspace{-5pt}
Without loss of generality, assume that $\omega_{vu}(\sigma^2,\Upsilon)> \omega_{vu}(\widetilde{\sigma}^2,\widetilde{\Upsilon})$. This implies $$ \text{expit}\{a - \sqrt{(\bar{d}^{(m)}_{vu})^2 \omega_{vu}(\sigma^2,\Upsilon)}\}-\text{expit}\{a - \sqrt{(\bar{d}^{(m)}_{vu})^2 \omega_{vu}(\widetilde{\sigma}^2,\widetilde{\Upsilon})}\}<0,$$
for any $(\bar{d}^{(m)}_{vu})^2>0$. Therefore, for the pair of nodes $(v,u)$, it follows
\begin{eqnarray*}
\begin{split}
& \mbox{pr}(y_{vu}^{(m)} = 1\mid \sigma^2,\Upsilon) - \mbox{pr}(y_{vu}^{(m)} = 1\mid \widetilde{\sigma}^2,\widetilde{\Upsilon}) \\
 &=\E_{a}[ \E_{(\bar{d}^{(m)}_{vu})^2}[ \text{expit}\{a - \sqrt{(\bar{d}^{(m)}_{vu})^2 \omega_{vu}(\sigma^2,\Upsilon)}\}-\text{expit}\{a - \sqrt{(\bar{d}^{(m)}_{vu})^2 \omega_{vu}(\widetilde{\sigma}^2,\widetilde{\Upsilon})}\} \mid a ]] < 0,
 \end{split}
\end{eqnarray*}
for any $a \in \R$, thereby proving Theorem \ref{thm:s2_g}.
\end{proof}

\vspace{15pt}

\begin{proof}[Proof of Proposition~\ref{prop:matrix_normal}]
First notice that, under \eqref{eq:bbm_row}, $((\bZ -\bmu \otimes \bone^{\intercal}_V) \mid \sigma^2, \Upsilon) $ is distributed as a matrix-normal. Specifically, $((\bZ -\bmu \otimes \bone^{\intercal}_V) \mid \sigma^2, \Upsilon) \sim \mbox{MN}_{K \times V}(\bzero_{K\times V}, \bI_K, \sigma^2\bSigma_\Upsilon) $. Therefore, leveraging~standard properties of matrix-normal distributions, we have that 
\begin{eqnarray*}
    (\bR(\bZ -\bmu \otimes \bone^{\intercal}_V) \mid \sigma^2, \Upsilon) \sim \mbox{MN}_{K \times V}(\bzero_{K\times V}, \bR\bI_K \bR^{\intercal}, \sigma^2\bSigma_\Upsilon).
\end{eqnarray*}
Since $\bR$ is an orthogonal matrix, it holds $\bR\bI_K \bR^{\intercal}=\bI_K$, which implies  $( \bR(\bZ-\bmu \otimes {\bf 1}^\intercal_{V} )\mid \sigma^2,\Upsilon) \overset{\text{d}}{=} (\bZ-\bmu \otimes {\bf 1}^\intercal_{V} \mid \sigma^2,\Upsilon)$.
\end{proof}

\vspace{15pt}

\begin{proof}[Proof of Theorem \ref{thm:consist}]
Theorem \ref{thm:consist} is a direct consequence of Doob's consistency \citep{doob1949application} and the finiteness of the space $\mathbb{T}_V$ of binary tree topologies. Following the identifiability of $(\sigma^2,\Upsilon)$~for~the~marginal model $\mathbb{P}_{\sigma^2,\Upsilon}$ proved in Theorem~\ref{thm:s2_g}, Doob's consistency theorem guarantees posterior consistency almost surely with respect to the prior $\Pi_{\sigma^2,\Upsilon}$, where $\Upsilon=(\bm{\lambda},\mathcal{T})$. 

    Denote with $\mathbb{S} \subset\mathbb{R}^+\times{\mathbb{R}^{+}}^{2V-2}\times\mathbb{T}_V$ the subspace of parameters $(\sigma^2,\bm{\lambda},\mathcal{T})$ for which posterior consistency holds everywhere due to Doob's, i.e. $\Pi_{\sigma^2,\bm{\lambda},\mathcal{T}}(\mathbb{S})=1$ where $\Pi_{\sigma^2,\bm{\lambda},\mathcal{T}}$ denotes the joint prior on $(\sigma^2,\bm{\lambda},\mathcal{T})$ defined in Section~\ref{sec_phy_23} of the main article. It follows that 
    \begin{equation}\label{eq:cons_1}
        1=\Pi_{\sigma^2,\bm{\lambda},\mathcal{T}}(\mathbb{S} )=\E_{\sigma^2,\bm{\lambda},\mathcal{T}}[ \mathcal{I}_{\mathbb{S}}(\sigma^2,\bm{\lambda},\mathcal{T})]=\sum\nolimits_{j=1}^{\mid\mathbb{T}_V\mid}\E_{\sigma^2,\bm{\lambda}\mid \mathcal{T}=\mathcal{T}_j}[ \mathcal{I}_{\mathbb{S}}(\sigma^2,\bm{\lambda},\mathcal{T})]\Pi_\mathcal{T}(\{\mathcal{T}_j\}),
    \end{equation}
    where $\mathcal{I}$ is the indicator function. Since $\Pi_\mathcal{T}(\{\mathcal{T}_j\})>0$, $\forall\mathcal{T}_j\in\mathbb{T}_V$, due to the full support of the prior we consider, \eqref{eq:cons_1} implies that
    \begin{equation}\label{eq:cons_2}
         \E_{\sigma^2,\bm{\lambda}\mid\mathcal{T}=\mathcal{T}_j}[ \mathcal{I}_{\mathbb{S}}(\sigma^2,\bm{\lambda},\mathcal{T})]=1, \quad \forall \mathcal{T}_j\in\mathbb{T}_V.
    \end{equation}
    Therefore, $\forall\mathcal{T}_j\in\mathbb{T}_V$ it follows that $ \mathcal{I}_{\mathbb{S} }(\sigma^2,\bm{\lambda},\mathcal{T}_j)=1$ $a.s.-\Pi_{\sigma^2,\bm{\lambda}\mid\mathcal{T}_j}$. Finally, $\forall\mathcal{T}_0\in\mathbb{T}_V$ one can choose $H_0=\{\sigma^2,\bm{\lambda}: \mathcal{I}_{\mathbb{S} }(\sigma^2,\lambda,\mathcal{T}_0)=1\}$, for which it holds that $\Pi_{\sigma^2,\bm{\lambda}\mid\mathcal{T}_0}(H_0)=1$ and that $H_0\times\{\mathcal{T}_0\}\subset\mathbb{S} $.
\end{proof}

\vspace{10pt}
\let\oldbibliography\thebibliography
\renewcommand{\thebibliography}[1]{\oldbibliography{#1}
	\setlength{\itemsep}{4.5pt}} 

\spacingset{1.5}
\begingroup
\fontsize{11pt}{11pt}\selectfont
\bibliographystyle{JASA}
\bibliography{phynet.bib}
\endgroup

\end{document}